\patchcmd{\maketitle}{\@copyrightspace}{}{}{}
\newtheorem{definition}{Definition}
\newtheorem{lemma}{Lemma}
\newtheorem{theorem}{Theorem}
\newcommand{\todo}[1]{\textcolor{red}{{#1}}}
\newcommand{\td}[1]{\todo{[#1]}}
\newcommand{\bytd}[1]{\textcolor{orange}{{#1}}}
\newcommand{\hide}[1]{}
\newcommand{\fun}[1]{\mathtt{{#1}}}
\newcommand{\FF}{\mathsf{ff}}
\newcommand{\DEC}{\mathsf{decision}}
\newcommand{\bbfN}{\mathbb{N}}
\newcommand{\pthof}[1]{\langle #1 \rangle}
\newcommand{\procs}[0]{\mathcal{P}}
\newcommand{\proccalls}[1]{\mathcal{PC}_{#1}}
\newcommand{\params}[0]{\mathit{PPN}}
\newcommand{\inlngsof}[1]{\mathit{inln(#1)}}
\newcommand{\cfg}[0]{\mathit{cfgc}}
\newcommand{\cfgof}[1]{\prog(#1)}
\newcommand{\cfgprimeof}[1]{\prog^{\#}(#1)}
\newcommand{\subst}[0]{\sigma}
\newcommand{\substof}[1]{\subst(#1)}
\newcommand{\semof}[1]{\llbracket #1 \rrbracket}
\newcommand{\pdatrans}[5]{(#1, [#2;#3/#4], #5)}
\newcommand{\langof}[1]{L(#1)}
\newcommand{\prog}[0]{\mathit{prog}}
\newcommand{\pacof}[1]{\mathit{PAC}(#1)}
\newcommand{\pacofed}[0]{\pacof{\epsilon, \delta}}
\newcommand{\probof}[2]{\mathit{Prob}_{#1}[#2]}
\newcommand{\badpaths}[0]{\mathcal{B}}
\newcommand{\bd}{d}
\newcommand{\be}{e}
\newcommand{\lift}[2]{{#1}^{\langle {#2} \rangle}}
\newcommand{\MEM}{\mathit{Mem}}
\newcommand{\EQ}{\mathit{Equ}}
\newcommand{\cpachecker}{\textsc{CPAchecker}\xspace}
\newcommand{\cbmc}{\textsc{CBMC}\xspace}
\newcommand{\libalf}{\textsc{libalf}\xspace}
\newcommand{\libamore}{\textsc{libAMoRE++}\xspace}
\newcommand{\pacman}{\textsc{Pac-Man}\xspace}
\newcommand{\crest}{\textsc{Crest}\xspace}
\newcommand{\fp}{\mathcal{X}_{\mathit{FP}}}
 \patchcmd{\maketitle}{\@copyrightspace}{}{}{}
\ifdef{\longversion}{
\newcommand{\whenlongversion}[1]{#1}
}{
\newcommand{\whenlongversion}[1]{}
}
\newtheorem{corollary}{Corollary}
\begin{document}
\title{PAC Learning-Based Verification and Model Synthesis}
\numberofauthors{7}
\author{
\alignauthor Yu-Fang~Chen\\
\affaddr{Academia Sinica}
\and
\alignauthor Chiao~Hsieh\\
\affaddr{Academia Sinica}\\
\affaddr{National Taiwan University}
\and
\alignauthor Ond\v{r}ej~Leng\'{a}l\\
\affaddr{Academia Sinica}\\
\affaddr{Brno University of Technology}
\and
\alignauthor Tsung-Ju Lii\\
\affaddr{Academia Sinica}\\
\affaddr{National Taiwan University}
\and
\alignauthor Ming-Hsien~Tsai\\
\affaddr{Academia Sinica}
\and
\alignauthor Bow-Yaw~Wang\\
\affaddr{Academia Sinica}
\and
\alignauthor Farn~Wang\\
\affaddr{National Taiwan University}
 }

\maketitle

\begin{abstract}


We introduce a~novel technique for verification and model synthesis of
sequential programs.
Our technique is based on learning a~regular model of the set of feasible paths
in a~program, and testing whether this model contains an incorrect behavior.
Exact learning algorithms require checking equivalence between the model and
the program, which is a~difficult problem, in general undecidable.
Our learning procedure is therefore based on the framework of \emph{probably
approximately correct} (PAC) learning, which uses sampling instead and provides
correctness guarantees expressed using the terms \emph{error probability} and
\emph{confidence}.
Besides the verification result, our procedure also outputs the model with the
said correctness guarantees.
Obtained preliminary experiments show encouraging results, in some cases
even outperforming mature software verifiers.

%

\end{abstract}

\vspace{-2mm}
\section{Introduction}
\label{section:introduction}


Formal verification of software aims to prove software properties
through rigorous mathematical reasoning. Consider, for example, 
the C statement \texttt{assert(x > 0)} specifying that the value of the variable
\texttt{x} must be positive. If~the assertion is formally verified,
it \emph{cannot} be violated in any possible execution during runtime. 
Formal verification techniques are, however, often computationally expensive. 
Although sophisticated heuristics have been developed to improve
scalability of the techniques, formally verifying real-world software is still
considered to be impractical.

\hide{
The complexity of current programs and the associated state space explosion
make the programs hard to be analyzed precisely, and over-approximating analyses
in abstract domains often flood users with false warnings.
Unsound analyses, on the other hand, show good scalability while
keeping the number of false warnings low~\cite{beyer:svcomp14}.
However, in case unsound analyses do not find and error, they mostly do not
provide any statistical guarantees of correctness of the program.
Statistical guarantees may be needed in the development of safety-critical
systems, for example during certification of software used in the aviation
industry~\cite{airborne-safety}.
}


A~common technique to ensure quality in industry is software
testing.
Errors in software can be detected by exploring different software behaviors
via injecting various testing vectors.
Testing cannot, however,
guarantee software is free from errors. Consider again the assertion
\texttt{assert(x > 0)}.
Unless all system behaviors are
explored by testing vectors, it is unsound to conclude that the
value of \texttt{x} is always positive. Various techniques have been
proposed to improve its coverage, but it is an inherent feature of software
testing that it cannot establish program properties conclusively.


\enlargethispage{2.4mm}

In this paper, we propose a novel learning-based approach that aims to balance
scalability and coverage of existing software engineering
techniques. In order to be scalable, as for software testing, our
new technique explores only a~subset of all program behaviors.
Moreover, we apply
machine learning to generalize observed program behaviors for better
semantic coverage.
Our technique allows software engineers to combine scalable testing with
high-coverage formal analyses and improve the quality assurance process.
We hope that this work reduces the dichotomy between
formal and practical software engineering techniques.

In our technical setting, we assume programs are annotated with
program assertions. A program assertion is a~Boolean expression
intended to be true every time it is encountered during program execution.
Given a
program with assertions, our task is to check whether all
assertions evaluate to true on all possible executions. In principle,
the problem can be solved by examining all program executions. It is,
however, prohibitive to inspect all executions exhaustively since there may
be infinitely many of them.
One way to simplify the analysis is to group the set of program executions to paths of a control flow graph.



\enlargethispage{3.5mm}

A \emph{control flow graph} (CFG) is derived from the syntactic structure of a
program source code. Each execution of a program corresponds to a path
in its control flow graph. One can therefore measure the completeness
of software testing on CFGs. Line coverage, for instance,
gives the ratio of explored edges in the CFG of the tested
program, while branch coverage is the ratio of explored
branches of this CFG. Note that such syntactic measures of code
coverage approximate program executions only very roughly. Executions that
differ in the
number of iterations in a simple program loop have the same
line and branch coverages, although their computation may be
drastically different. A~full syntactic code coverage does not
necessarily mean all executions have been explored by software testing.


Observe that program executions traversing the same path in a CFG perform the
same sequence of operations (although maybe with different values).
Consider a~path corresponding to a program execution in a CFG.
Such a path can be characterized by the sequence of
decisions that the execution took when traversing conditional
statements in the CFG. We~call a~sequence of such
decisions a \emph{decision vector}. A~decision vector is feasible if it
represents one or more (possibly infinitely many) program executions, and 
infeasible if it represents a~sequence of branching choices than can never
occur in an execution of the program.
To check whether all assertions evaluate to $\mathit{true}$ on all executions, it 
suffices to examine all feasible decision vectors and check they do not
represent any assertion-violating program execution. 
Although feasibility of a decision vector can be determined by using
an off-the-shelf Satisfiability Modulo Theories (SMT) solver, 
the set of feasible decision vectors is in general difficult to compute
exactly.
Therefore, we apply algorithmic learning, in particular the framework of
probably approximately correct learning, to construct a~regular approximation
of this~set.


\hide{
The task of program verification can be described as determining whether there
exists a~feasible decision vector such that a~path characterized by the vector
leads to an error node in the control flow graph of the program.
In general, existence of such a~vector is undecidable; moreover, the set of
feasible decision vectors is difficult to compute, in general, it may even not
be recursively enumerable.
\td{OL: now talk about inspiration in natural language processing?}
}

\hide{
We can determine feasibility of a~decision vector by constructing a~path formula
of the path corresponding to the decision vector, and then discharge the
formula using an off-the-shelf SMT solver.
}


Within the framework of \emph{probably approximately correct} (PAC) learning with queries,
learning algorithms query about target concepts to construct
hypotheses. The constructed hypotheses are then validated by
sampling. If~a~hypothesis is invalidated by witnessing a~counterexample, learning
algorithms refine the invalidated hypothesis by the witness and more
queries. If, on the other hand, a hypothesis conforms to all
samples, PAC learning algorithms return the inferred hypothesis with
statistical guarantees. In our approach, we adopt a PAC learning
algorithm with queries to infer a regular language approximation to the set of feasible decision vectors of a program.


\hide{ 
In each query, the PAC learning algorithm asks whether a~posed
decision vector is feasible. The technique proposed in this paper answers the
query by determining feasibility of the decision vector with an SMT
solver. After a number of queries, the PAC learning algorithm constructs
a~regular set of decision vectors as a~hypothesis. Our technique
validates the hypothesis by software testing in the following~way.
We~collect a number of
program executions and check if the hypothesis contains all (feasible)
decision vectors of the sample executions. If any feasible decision
vector does not belong to the hypothesis, the PAC learning algorithm
refines the hypothesis with the witnessed counterexample and more queries.
Otherwise, we obtain a regular set
approximating the set of feasible decision vectors of the program with high
probability.
\td{OL: the last sentence is vague... I don't like the ``with high probability'' part.}
}

\hide{
In our approach, we use a~learning algorithm to infer a~regular language that
approximates the set of feasible decision vectors of a~program's control flow
graph.
The learning algorithm proceeds in iterations.
During an iteration, the algorithm selects decision vectors, tests their
feasibility, and builds a~\emph{hypothesis}---a~finite automaton approximating the
set of feasible decision vectors of the program.
At the end of an iteration, the algorithm checks the quality of the presented
hypothesis by testing whether a number of decision vectors, taken randomly from
the language of the hypothesis, are feasible in the control flow graph of the
program.
If the quality test fails, the algorithm proceeds with refining the hypothesis.
}

To grasp the statistical guarantees provided by PAC learning, consider the
task of checking defects in a~large shipment using uniform
sampling. Because of the size of the shipment, it is impractical to check
every item. We instead want to know, with a~given confidence~$\delta$, if the defect probability is at 
most~$\epsilon$. This can be done by
selecting $r$ (to be determined later) randomly chosen items. If all
chosen items are good, the method reports that the defect probability
is at most~$\epsilon$. We argue the simple method can err with
probability at most $1 - \delta$. Suppose $r$ randomly chosen items
are tested without any defect, but the defect probability is, in fact,
\emph{more than}~$\epsilon$.
Under this thesis, the probability that
$r$ random items are all good is lower than $(1 - \epsilon)^r$. That is,
the method is incorrect with probability lower than
$(1 - \epsilon)^r$. Take $r$ such that
$(1 - \epsilon)^r < 1 - \delta$. The simple method
reports incorrect results with probability at most $1 -
\delta$; we equivalently say the result of the method is $\pacofed$-correct.
\hide{
Suppose we test items chosen randomly and fail to find any defect after testing
$r$ items.
What can we say about the probability that an item randomly chosen from the
shipment will be a defect?
Assume we want to claim ``the probability that a~chosen item is a~defect is at
most $\epsilon$ under uniform distribution.''
(Or, equivalently, that the probability that the item is good is at least $1-
\epsilon$.)
How confident can we be about our claim?
If our claim is false, the probability of selecting $r$ good items is at most
$(1 - \epsilon)^r < 1$.
For any $0 < \delta < 1$, there is an $r$ such that $1 - (1 - \epsilon)^{r} >
\delta$.
This means that the probability of finding a~defect in $r$ random items is at
least~$\delta$.
If none from $r$ random items is a defect, we say the shipment is good with
defect probability at most $\epsilon$ and confidence at
least~$\delta$.
}
\hide{
Observe that if we want to increase the confidence in our claim, we either need
to increase the claimed error probability or the number of tested items.
On the other hand, if we want to decrease the error probability, we either need
to decrease our confidence or increase the number of tested items.
}

Using a similar argument, it can be shown that our PAC learning algorithm 
returns a~regular set approximating the set of feasible decision
vectors of the program with the error probability $\epsilon$ and confidence $\delta$
of our choice.
If the inferred set contains no decision vector representing an assertion-violating program
execution, our technique concludes the verification with statistical guarantees about correctness.
\hide{
In the other case, when the quality test succeeds, the algorithm
intersects the hypothesis with a~finite automaton obtained from the control flow
graph of the program, and tests whether there is an error node reachable in the
intersection.
In case there is such an error node, because the hypothesis
is approximate, the algorithm checks whether the decision vector leading to the
error node is indeed feasible.
In case it is, we report an error.
If not, we continue in the next iteration of the learning algorithm, using the
infeasible decision vector as a~counterexample for refinement of the hypothesis.
If the intersection does not contain any error node, we can conclude that the
verified program is probably correct, with some statistical guarantees.
}


Our learning-based approach finds a balance between formal analysis
and testing. Rather than exploring program behaviors exhaustively, our
technique infers an approximation of the set of feasible decision vectors by
queries and sampling. 
Although the set of feasible decision vectors is
in general not computable, PAC learning with queries may still return a
regular set approximation of it with a quantified guarantee.
Such an approximate model
with statistical guarantees can be useful for program
verification.
With an approximate model that is $\pacofed$-correct and proved to be free from assertion violation, one can conclude that the program is also $\pacofed$-correct.
The statistical guarantees are different from syntactic code
coverages in software testing.
Recall that our application of PAC learning works over decision vectors.
Decision vectors in turn represent
program executions. When our technique does not find any assertion
violation, the statistical guarantees give software engineers a
semantic coverage about program executions. Along with conventional
syntactic coverages, such information may help software engineers
estimate the quality of software.

We implement a prototype, named \pacman (PAC learning-based Model synthesizer
and ANalyzer), of our procedure based on program verifiers \cpachecker, \cbmc,
and the concolic tester \crest.
We evaluate the prototype on the benchmarks from the recursive category of SV-COMP~2015~\cite{svcomp}.
The results are encouraging---we can find all errors that can be found by \crest.
We also provide quantified guarantee accompanied by a~faithful approximate
model for several examples that are challenging for program verifiers and
concolic testers.
This approximate model can later be reused, e.g., for verifying the same program with a~different set of program assertions.

Our contributions are summarized in the following:
\begin{itemize}
\vspace{-1.5mm}
\itemsep0mm
\item  We show the PAC learning algorithm can be applied to synthesize
  a~faithful approximate model of the set of feasible decision vectors of
  a~program.
  Such a~model can be useful in many different aspect of program verification
  (cf.~Section~\ref{section:conclusions} for details).
  We believe it is not hard to adopt our approach to handle different type of
  systems (e.g., black box systems) and to obtain approximate models on
  a~different level of abstraction (e.g., on a~function call graph).

\item We develop a verification procedure based on the approximate model obtained from PAC learning. The procedure integrates the advantages of both testing and verification. It uses  testing techniques to collect samples and catch bugs. The PAC learning algorithm generalizes the samples to obtain an approximate model that can then be analyzed by verification techniques for statistical guarantees.

\end{itemize}

\hide{
We note that exact learning cannot be applied to the task of learning a~regular
set of feasible vectors.
The reason for this is that the set of feasible decision vectors is in general
indeed not regular, so if the learning algorithm keeps iterating, one cannot know
whether it just needs to refine a~few more times, or whether the algorithm never
terminates, due to the set of feasible vectors being not regular.
When compared with statistical model checking, our technique is different,
although both make statistical claims.
Statistical model checking tests whether a~property holds in a~given model, and
makes statistical claims about the property, not generating models.
Our technique generates models and makes statistical claims about
those.
}

%
%
\hide{



A program assertion is a Boolean expression intended to be true
when it is evaluated during execution. Given a program annotated with
assertions, we would like to check whether all assertions evaluate to
true on all possible executions. In principle, the problem can be
solved by examining all program executions. It is however prohibitive
to inspect all executions exactly. Control flow graphs are used to
approximate program executions in practice. The syntax-based
approximation makes most program analysis techniques inherently
intentional. Researches are invested to recover real program
behaviors from the syntactic abstraction.
\td{OL: I don't understand the 3 last sentences}


Rather than control flow graphs, we propose to represent program
executions by decision vectors. A decision vector characterizes
a program execution by recording conditional decisions. For each
conditional statement during an execution, a decision is made and
directs the execution. The sequence of conditional decisions during a
program execution is the decision vector of the execution. Using an
Satisfiability Modulo Theories (SMT) solver, one can simultaneously
examine all program executions with the same decision vector. 


An arbitrary sequence of conditional decisions may not correspond to a
program execution. We say a decision vector is feasible if it
corresponds to a program execution. Computing the set of feasible
decision vectors nonetheless is not easy. One certainly can get the
decision vector from any program execution. Yet it is not clear how to
compute \emph{all} feasible decision vectors. Similar problems arise
in natural language processing.  
English corpora are abundant. Can English grammar be reconstructed 
automatically? As in natural language processing, we apply machine
learning to find feasible decision vectors.


Consider the set of feasible decision vectors as an unknown
formal language. We use a learning algorithm to infer an approximate regular
language. The learning algorithm represents inferred regular languages
by finite automata. In order to construct an automaton, the learning
algorithm needs to know whether a decision vector is feasible or
not. When an automaton is inferred, the learning algorithm uses a
number of random decision vectors to test the quality of the
automaton. If the inferred automaton passes the tests, we use it to
represent feasible decision vector and hence executions of a given
program.  


After an approximation to feasible decision vectors is derived, we
check if any decision vector leads to assertion violation. We
compute error decision vectors by transforming the control flow
graph into a finite automaton accepting decision vectors leading to
assertion violation. If the intersection of feasible and error
decision vectors is empty, there is no assertion violation in the
program. Otherwise, recall that feasible decision vectors are
approximate. We check if the error decision vector leads to
assertion violation by an SMT solver before reporting to the user. 


Similar to natural language processing, we do not know the set of
feasible decision vectors exactly and never will. Subsequently,
there is no way to know if the inferred regular language captures
feasible decision vectors precisely. The theory of Probably
Approximately Correct (PAC) learning nonetheless allows us to quantify
the quality of approximations to feasible decision vectors. 

To give a glimpse, consider checking defects in a (very large)
shipment by sampling uniformly. Suppose we fail to find any defect
after testing $r$ random items. What can we say about the
probability of defects in the shipment? How confident are we
about our claim? Assume we want to claim ``the probability of defects
is at most $\epsilon$ under the uniform distribution.'' If our
claim is false, the probability of selecting $r$ good items is at most
$(1 - \epsilon)^r < 1$. For any $0 < \delta < 1$, there is an $r$ that
$1 - (1 - \epsilon)^{r} > \delta$. Then the probability of finding a
defect in $r$ random items is at least $\delta$. If $r$ random items
fail to defects, we say the shipment is good with at most $\epsilon$
of error and at least $\delta$ of confidence.

Based on a similar argument, the learning algorithm guarantees
to generate a regular approximation to feasible decision vectors
with error and confidence of our choice. Even though our technique
cannot verify a program conclusively, it for instance can show that a
program is correct with at most $1 \%$ of error and at least $99 \%$
of confidence. Such statistical statements can be informative when
formal verification fails to perform in circumstances.


We note that exact learning is not applicable to finding feasible
decision vectors. Since feasible decision vectors are unknown,
one cannot decide whether it is correctly inferred. Exact learning
subsequently does not know when to finish. Also, our technique is
different from statistical model checking although both make
statistical claims. In statistical model checking, a property is
verified against a given model. It makes statistical claims about the
property and does not generates models. Our technique generates models
and makes statistical claims about models.
}


\vspace{-2.5mm}
\section{Preliminaries}
\label{section:preliminaries}
Let $\mathcal{X}$ be the set of program variables and $\mathcal{F}$
the set of function and predicate symbols.
We use $\mathcal{X'}$ for the set $\{x' \mid x \in
\mathcal{X} \}$. The set $\mathcal{T}[\mathcal{X}, \mathcal{F}]$ of
\emph{transition} formulae consists of well-formed first-order logic
formulae over $\mathcal{X}, \mathcal{X'}$, and $\mathcal{F}$.
For a transition formula $f \in \mathcal{T}[\mathcal{X}, \mathcal{F}]$
and $n \in \bbfN$, we use $\lift{f}{n}$ to denote
the formula obtained from $f$ by replacing all free variables $x \in
\mathcal{X}$ and $x' \in \mathcal{X'}$ with $\lift{x}{n}$ and
$\lift{x}{n+1}$ respectively. 

We represent a~program with a~single procedure using a~control flow
graph.
(Section~\ref{section:function} extends the notion to programs with
multiple procedures and procedure calls.)
A \emph{control flow graph (CFG)} is a~graph $G=(V, E, v_i, v_r, V_e, \fp)$
where $V = V_b \cup V_s$ is a~finite set of \emph{nodes} consisting of
disjoint sets of \emph{branching} nodes~$V_b$ and \emph{sequential}
nodes~$V_s$, $v_i
\in V$ is the \emph{initial} node, $v_r \in V_s$ is the \emph{return} node, $V_e \subseteq V$ is the set of
\emph{error} nodes, $\fp\subseteq \mathcal{X}$ is the set of formal parameters, and $E$ is a~finite set of \emph{edges} such that $E \subseteq V \times
\mathcal{T}[\mathcal{X}, \mathcal{F}] \times V$ and the following conditions~hold:
\begin{itemize}
\vspace{-1.5mm}
\itemsep0mm
\item  for any branching node $v_b \in V_b$, there are exactly two nodes $v'_0,
  v'_1 \in V$ with $(v_b, f_0, v'_0), (v_b, f_1, v'_1) \in E$, where
  $f_0, f_1 \in \mathcal{T}[\mathcal{X}, \mathcal{F}]$ are transition
  formulae;
\item  for any non-return sequential node $v_s \in V_s \setminus \{v_r\}$, there is exactly one node
  $v' \in V$ with $(v_s, f, v') \in E$; and
\item  for the return node $v_r \in V_s$, there is no $v' \in V$ such
  that $(v_r, f, v') \in E$ for any $f \in \mathcal{T}[\mathcal{X},
  \mathcal{F}]$.
\vspace{-1.5mm}
\end{itemize}
\hide{
A \emph{control flow graph (CFG)} of a~program is a~tuple $G=(V_b \cup V_s, E, v_i, V_e, \fp)$
where $V_b \cup V_s$ is the (finite) set of \emph{nodes} consisting of
disjoint sets of \emph{branching} nodes~$V_b$ and \emph{sequential} nodes~$V_s$, the node $v_i
\in V_s$ is the \emph{initial} node, the set $V_e \subseteq V_b \cup V_s$ is the set of
\emph{error} nodes, $\fp\subseteq \mathcal{X}$ is the set of formal parameters, and the set $E \subseteq (V_b \cup V_s) \times
\mathcal{T}[\mathcal{X}, \mathcal{F}] \times (V_b \cup V_s)$ is the set of
\emph{edges}, where
\begin{itemize}
\item for any branching node $v_b \in V_b$, there are exactly two nodes $v'_0,
  v'_1 \in V_b \cup V_s$ with $(v_b, f_0, v'_0), (v_b, f_1, v'_1) \in E$, where
  $f_0, f_1 \in \mathcal{T}[\mathcal{X}, \mathcal{F}]$ are transition
  formulae such that $f_0 \lor f_1$ is a valid formula; and
\item for any sequential node $v_s \in V_s$, there is at most one node
  $v' \in V_b \cup V_s$ with $(v_s, f, v') \in E$ for $f \in
  \mathcal{T}[\mathcal{X}, \mathcal{F}]$.
\end{itemize}
}
We say $v'$ is a \emph{successor} of $v$ if $(v, f, v') \in E$.
Assume, moreover, that the two successors $v'_0$ and $v'_1$ of the
branching node $v$ are ordered.
Intuitively, the `1' corresponds to the \emph{if} branch and the `0'
corresponds to the \emph{else} branch.
We call $v'_0$ and $v'_1$ the
\emph{$0$-successor} and \emph{$1$-successor} of $v$ respectively.
Similarly, $f_0$ and $f_1$ are called the \emph{$0$-transition} and
\emph{$1$-transition formulae} of $v$.
Note that the definition of a~CFG allows us to describe nondeterministic
choice, which is commonly used to model the environment.
To be more specific, a~nondeterministic choice from a branching node~$v$ can be
represented by defining both the $0$-transition and $1$-transition formulae of
$v$ as $\bigwedge_{x \in
\mathcal{X}} x=x'$.


A \emph{path} in the CFG $G$ is a sequence $\pi = \pthof{v_0, f_1, v_1, f_2, v_2,
\ldots, f_m, v_m}$ such that $v_0 = v_i$ and $(v_j, f_{j+1}, v_{j+1}) \in E$ for
every $0 \leq j < m$.
The path $\pi$ is \emph{feasible} if the path formula
$\bigwedge_{k=1}^m \lift{f_k}{k}$ is satisfiable.
It is an \emph{error}
path if $v_j \in V_e$ for some $0 \leq j \leq m$. 
The task of our analysis is to check whether $G$ contains
a~feasible error~path.

A sequence $w = a_1 a_2
\cdots a_n$ with $a_j \in \{0,1\}$ for $1 \leq j \leq n$ is called a
\emph{word} over $\{0,1\}$. The \emph{length} of $w$ is $|w| = n$. The
word of length 0 is the \emph{empty} word $\lambda$. We also use
$w[j]$ to denote the $j$-th symbol $a_j$.
If $u, w$ are words over $\{0,1\}$, $u \cdot w$~denotes the
\emph{concatenation} of $u$ and~$w$.
A~\emph{language} $L$ over $\{0,1\}$ is a set of words over $\{0,1\}$.
\hide{$L^{<k} = \{ w \in L \mid | w | < k\}$. } 
\hide{
  A~\emph{language} over $\{0,1\}$ is a set of words over $\{0,1\}$.
  Let $L$ be a language over
  $\{0,1\}$. We define $L^{\leq k} = \{ w \in L \mid | w | \leq k\}$ and,
  similarly, $L^{=k} = \{ w \in L : | w | = k\}$.
  \hide{$L^{<k} = \{ w \in L \mid | w | < k\}$. } 
}

\hide{
For a word $d = d_1 d_2 \dots d_n$, we use $d[j]$ to denote the symbol $d_j$ if
$0 \leq j \leq n$; otherwise $d[j]$ is undefined.
}


We introduce the function $\DEC$ that maps a~path~$\pi$ of~$G$ to a~sequence of
decisions made in the branching nodes traversed by $\pi$.
Formally, $\DEC$ is a~function from paths to words over $\{0,1\}$ defined recursively as follows:
$\DEC(\pthof{v_0, f_1, v_1, f_2, v_2, \ldots, f_m, v_m}) = \DEC(\pthof{v_0, f_1}) \cdot\DEC(\pthof{v_1, f_2, v_2, \ldots, f_m, v_m})$ such that
\begin{eqnarray*}
  \DEC(\pthof v)     \!\!\!&=& \lambda \\
  \DEC(\pthof{v, f}) \!\!\!&=&
  \!\!\!\left\{
    \begin{array}{ll}
      \lambda & \textmd{if } v \in V_s,\\
      0 & \textmd{if } v \in V_b \textmd{ and } \\
      & f \textmd{ is the
        $0$-transition formula of } v,\\
      1 & \textmd{if } v \in V_b \textmd{ and }\\ 
      & f \textmd{ is the
        $1$-transition formula of } v.\\
    \end{array}
  \right.
\end{eqnarray*}
For a path $\pi$, $\DEC (\pi)$ is the \emph{decision vector} of $\pi$.
We lift $\DEC$ to a set of paths $\Pi$ and define \emph{decision vectors} of
$\Pi$ as $\DEC (\Pi) = \{ \DEC (\pi) \mid \pi \in \Pi \}$.

A \emph{finite automaton} (with $\lambda$-moves) $A$ is a
tuple $A = (\Sigma, Q, q_i, \Delta,
F)$ consisting of a finite \emph{alphabet} $\Sigma$, a~finite set of
\emph{states} $Q$, an 
\emph{initial} state $q_i \in Q$, a~\emph{transition relation} $\Delta
\subseteq Q \times (\Sigma \cup \{ \lambda \}) \times Q$, and a~set of
\emph{accepting} states $F \subseteq Q$.
A~transition $(q, \lambda,
q') \in \Delta$ is called a \emph{$\lambda$-transition}.
A word $w$ over $\Sigma$ is
%
%
%
\emph{accepted} by $A$ if there are
states $q_0, \dots, q_m \in Q$ and symbols (or
$\lambda$'s) $a_1, \dots, a_m \in (\Sigma \cup \{ \lambda \})$,
such that $w = a_1 \cdots a_m$,
for every $0 \leq j < m$ there is a~transition $(q_j, a_{j+1},  q_{j+1}) \in \Delta$,
and further $q_0 = q_i$ and $q_{m} \in F$.
The \emph{language} of~$A$ is defined as $L (A) =
\{ w \mid w \textmd{ is accepted by } A \}$. A~language~$R$ is \emph{regular} if
$R = L (A)$ for some finite automaton~$A$. The finite automaton~$A$ is
\emph{deterministic} if its transition relation is a function from $Q
\times \Sigma$ to $Q$. For any finite automaton~$A$, there exists
a~deterministic finite automaton (DFA) $B$ such that $L (A) = L (B)$.


\hide{
A finite automaton with $\epsilon$-transitions (FA-$\epsilon$) is
a~tuple $A_{\epsilon} = (\Sigma, Q, q_i, \Delta_{\epsilon}, F)$ where $\Sigma$,
$Q$, $q_i$, and $F$ have the same meaning as in the definition of an FA (with
the additional constraint that $\epsilon \not\in \Sigma$), but the transition
relation
$\Delta_{\epsilon} \subseteq Q \times (\Sigma \cup \{\epsilon\}) \times Q$ can
also contain the so-called $\epsilon$-transitions.
Note that any FA-$\epsilon$ $A_{\epsilon}$ can be transformed into an FA $A$
using the standard textbook algorithm for removing $\epsilon$-transitions.
We define the language of $A_{\epsilon}$ to coincide with the language of $A$.
}

A~\emph{pushdown automaton} (PDA) is a~tuple $P = (\Sigma, Q, \Gamma, q_i,
\Delta, F)$ where $\Sigma$ is a~finite \emph{input alphabet}, $Q$ is a~finite
set of \emph{states}, $\Gamma$ is a~finite \emph{stack alphabet}, $q_i \in Q$
is the \emph{initial state}, $F \subseteq Q$ is the set of \emph{final states},
and $\Delta \subseteq Q \times (\Sigma \cup \{\lambda\}) \times (\Gamma \cup
\{\lambda\})\times (\Gamma \cup \{\lambda\}) \times Q$ is a~\emph{transition
relation}.
We use $\pdatrans q a b c {q'}$ to denote the transition $(q, a, b, c, q')$,
and we sometimes simplify $\pdatrans q a \lambda \lambda {q'}$ to
$(q, a, q')$.
We define a~\emph{configuration} of $P$ as a~pair $(q, \gamma) \in Q \times
\Gamma^*$.
A~word~$w$ over $\Sigma$ is \emph{accepted} by $P$ if there exists a~sequence of
configurations $(q_0, \gamma_0), \ldots, (q_m, \gamma_m) \in Q \times \Gamma^*$
and a~sequence of symbols (or $\lambda$'s) $a_1, \ldots, a_m \in (\Sigma \cup
\{\lambda\})$, such that $w = a_1 \cdots a_m$, $q_0 = q_i$, $\gamma_0 =
\epsilon$, $q_m \in F$, and for every $0 \leq j < m$ it holds that
there are some $b_j, b_{j+1} \in (\Gamma \cup \{\lambda\})$ and $\gamma'_j,
\gamma'_{j+1} \in \Gamma^*$ such that $\gamma_j = b_j \gamma'_j, \gamma_{j+1} =
b_{j+1} \gamma'_{j+1}$, and there is $\pdatrans {q_j} {a_{j+1}} {b_j}
{b_{j+1}} {q_{j+1}} \in \Delta$.
The language of $P$ is defined as $\langof P = \{w \mid w \text{~is accepted
by~} P\}$.

\vspace{-2mm}
\section{Overview}
\label{section:overview}
In this section, we give an overview of our verification procedure.
Let $G$ be a CFG of a program. Our goal is to check whether there is a
feasible error path in $G$.
More concretely, consider the set $\Pi$ of feasible paths
in $G$ and the set $\badpaths$ of error paths in
$G$.
We~call the languages $\DEC (\Pi)$ and $\DEC (\badpaths)$ over the alphabet
$\{ 0, 1 \}$ as \emph{feasible decision
vectors} and \emph{error decision vectors} respectively.
The program is correct if the intersection $\DEC
(\Pi) \cap \DEC (\badpaths)$ is empty, i.e., if $G$ contains no feasible error path.

\hide{
We therefore would like to check if $\DEC (\badpaths) \cap \DEC (\Pi)$
is empty.
}

\begin{figure}[t]
\scalebox{0.85}{


\begin{tikzpicture}[node distance = 2cm, auto]
\tikzstyle{decision} = [diamond, draw, fill=blue!20, aspect=3,text width=2.5cm, text badly centered, inner sep=0pt]
\tikzstyle{block} = [rectangle, draw, fill=blue!20, 
     text centered, rounded corners, minimum height=0.8cm]

\tikzstyle{line} = [draw, -latex']

    \draw[rounded corners,fill=green!20] (-2cm,1cm) rectangle +(8cm,-1.3cm) ;
	\node[text centered,text width=6cm] at (2cm,0.3cm) {\textbf{PAC Automata Learning Algorithm}\\(Section~\ref{section:learning})};

    \draw[rounded corners,fill=yellow!20] (-2cm,4.5cm) rectangle +(8cm,-2.5cm) ;

	\node at (2,4) {\bf Mechanical Teacher};
    
    \node [block,text width=3cm] at (0,3) {Resolving Membership Queries (Section~\ref{section:membership})};

    \node [block,text width=3cm] at (4,3) {Resolving Equivalence Queries by Sampling (Section~\ref{section:sampling})};

	\node[text width=7cm] at (1.5,4.7){Found a feasible error decision vector};
	\node[] at (4.5,5.1){The system is $\pacofed$-correct};
    
    \path [line] (-0.5,1) -- node {$\MEM(w)$} (-0.5,2.4);
    \path [line] (0.5,2.4) -- node {yes/no} (0.5,1);

    \path [line] (3.5,1) -- node {$\EQ(C)$} (3.5,2.4);
    \path [line] (4.5,2.4) -- node {counterexample} (4.5,1);





    \path [line] (3.8,3.65) -- (3.8,4.7) -- (3.4, 4.7);
    \path [line] (4.1,3.65) -- (4.1,5);
    \path [line] (-2,0) -- (-2.3,0) -- (-2.3, 4.7)--(-2.1,4.7);

\end{tikzpicture}
}
\vspace{-2mm}
\caption{Components of our verification procedure}
\vspace{-6mm}
\label{figure:flowchart-hl}
\end{figure}
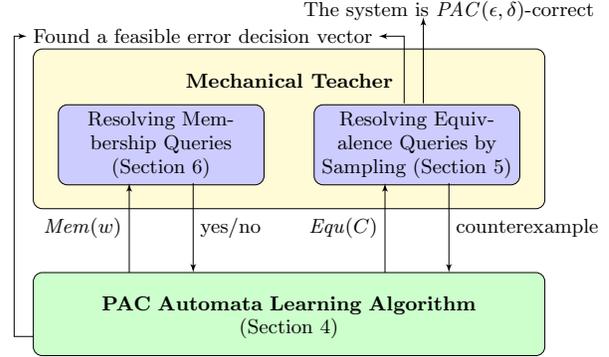

Representation of the language $\DEC (\Pi)$ of all feasible
decision vectors in $G$ is not so easy.
In general, this language may not be regular or even computable.
In our procedure, we construct a~candidate finite automaton~$C$ that
approximates $\DEC(\Pi)$, 
the set of feasible decision vectors of $G$.
We infer $C$ using a~\emph{probably approximately correct} (PAC) online
automata learning algorithm~\cite{angluin:learning1987}.
The use of PAC learning provides us with statistical guarantees about the
correctness of $C$---we can claim that $C$ is $\pacofed$-correct, i.e., with
\emph{confidence}~$\delta$, the deviation of $\langof C$ from $\DEC(\Pi)$
is less than~$\epsilon$ (we~give a~proper explanation of the terms in
Section~\ref{section:learning}).

On the other hand, it is straightforward to convert $G$ to a finite automaton $B$
accepting the set of all error decision vectors~$\DEC (\badpaths)$.
Intuitively, states of~$B$ correspond to nodes of $G$, the initial state of~$B$
corresponds to the initial node of $G$, and accepting states of~$B$ correspond
to $G$'s error nodes.
An~edge from a~sequential node is translated to a
$\lambda$-transition. For a branching node, the edges to its $0$- and
$1$-successors are translated to transitions over symbols $0$ and
$1$ respectively (cf.~Section~\ref{section:error-branch-decisions}). 

A~high-level overview or our learning procedure is given in
Figure~\ref{figure:flowchart-hl} (the procedure is similar in structure to the
one of~\cite{angluin:learning1987}).
It~consists of two main components:
The learning algorithm asks the teacher two kinds of questions:
\emph{membership} (``Is a~given decision vector feasible?'') and
\emph{equivalence} (``Is a~given candidate finite automaton
$\pacofed$-correct?'') queries.
The teacher resolves the queries, at the same time observing whether some of
the tested decision vectors corresponds to a~feasible error path.
By posing these queries, either the learning algorithm iteratively constructs
a~$\pacofed$-correct approximation of $\DEC(\Pi)$ or our procedure finds
a~feasible error decision vector.


\hide{However, in our setting, we do not have a unique target; any language
satisfying the two conditions is qualified. This is very similar to the setting
of \emph{learning-based compositional
verification}~\cite{cobleigh:learning2003,chen:learning2009}, where the learning algorithm is
modified to return a~counterexample in case the system is buggy.
We modified the learning algorithm in a similar way. At the bottom of Figure~\ref{figure:flowchart-hl}, we add two tests to handle the case when the system is buggy, i.e., $L(C)\cap \DEC(\Pi) \neq \emptyset$.
\td{OL: I don't understand the last sentence, YFC: try again, OL: I still don't understand the last sentence, YFC:yet another try.}
}

As with other online learning-based techniques, we need to devise
a~mechanical teacher that answers queries from the learning
algorithm.
Checking membership queries (i.e., membership in the set $\DEC(\Pi)$ of
feasible decision vectors) is relatively
easy---for example, given a~decision vector $d$, we obtain its corresponding
path $\pi$ by unfolding the CFG~$G$ according to $d$, and use
an off-the-shelf solver
to decide whether $\pi$ is feasible or not
(cf.~Section~\ref{section:membership}).


When the automata learning algorithm infers a candidate finite automaton $C$,
we need to check whether $\langof C$ approximates $\DEC (\Pi)$, i.e., whether $C$ is
$\pacofed$-correct.
Since we cannot compare $\DEC(\Pi)$ with
$\langof C$ directly, we employ the sampling-based approximate equivalence technique of
PAC learning.
While generally unsound, the technique still provides statistical guarantee
about the correctness of the inferred model (details are given in
Section~\ref{section:sampling}).

\hide{
\td{OL: OLD}
When the automata learning algorithm infers a candidate finite automaton $C$, since
$\DEC (\Pi)$ is unknown to us, checking $L (C) = \DEC
(\Pi)$ is not straightforward.
Instead of deciding whether $L (C)=\DEC (\Pi)$, we aim to infer a
probable approximation to $\DEC (\Pi)$ using
the technique of PAC learning.
Intuitively, we want to infer a finite automaton that accepts ``nearly all words in $\DEC(\Pi)$'' and can accept any word that is not in $\DEC(\Pi)$.
Now the meaning of ``nearly all words in $\DEC(\Pi)$'' is still vague.
It~will become clear after Section~\ref{section:sampling}, where we will introduce how we extend the PAC learning theory for exact learning to our setting.
We will use the PAC learning theory to substitute the equivalence checking with
random sampling and get a~statistical guarantee of correctness.
When an automaton approximation $C$ of $\DEC(\Pi)$ such that $L(C) \cap L(B) = \emptyset$ is found, we will conclude that the program is ``probably approximately correct''. Otherwise, a feasible decision vector may be returned when $\DEC(\Pi) \cap L(B) \neq \emptyset$. In the next three sections, we will introduce further details of the three components.
\td{OL: OLD END}
}

\hide{To be more specific, with a sampling mechanism required to support PAC guarantee, we can answer whether $L (C) \supseteq \DEC(\Pi)$ on a candidate finite automaton $C$ probably approximately by the following steps (the top-right of Figure~\ref{figure:flowchart-hl}): 
\begin{enumerate}
\item We compute the number of samples required for obtaining the PAC guarantee
  and pick a~set $S \subseteq \DEC(\Pi)$ of decision vectors using the sampling
  mechanism.
\item We check if any of the decision vectors in $S$ is not in~$L(C)$.
  For the case that all of them are in~$L(C)$, we~conclude that the system is
  ``probably approximately correct''.
\item Otherwise, there must be a decision vector $s\in S$, such that $s$~is not in $L(C)$.
  By checking if $s \in \DEC(\Pi)$, we either conclude that $s$~is a feasible
  error decision vector or return it to the learning algorithm to refine the
  next conjecture.
\end{enumerate}

In Section~\ref{section:verification}, we summarize the algorithm and show the theoretical guarantees of our algorithm. }

\hide{More precisely, we will select a
number of decision vectors from $L (S_G)$ 
\bytd{where $S_G$ is a finite automaton derived from the CFG $G$. One
must be careful in designing the sample language which decision
vectors are selected from.} 
\td{OL: say sth more}
The na\"ive universal sample language will not work because the
difference between the empty language and $\DEC (\Pi)$ is
\bytd{insignificantly small.}
\td{OL: what do you mean by ``insignificant''}
Section~\ref{section:sample-branch-decisions} explains
how to construct the finite automaton $S_G$ from~$G$.

After the sample language is determined, we design a sampling
algorithm to choose decision vectors randomly from $L (S_G)$. 
\bytd{Since different decision vectors can be selected with different
  probabilities, our}
sampling algorithm decides \td{OL: I don't understand} a distribution on $L (S_G)$. The
distribution in turn determines the quality of the inferred finite
automaton (see \ref{eqn:claim}).

For instance, suppose a
sampling algorithm only selects decision vectors of length $1$. Then
PAC learning does not have any guarantee on the difference of $L (C)$
and $\DEC (\Pi)$ for decision vectors of length greater than $1$. In
order to attain better statistical claim, it is necessary to sample
decision vectors of various lengths from $L (S_G)$ fairly
(Section~\ref{section:sampling}). 

Using the sampling algorithm, we test the conformance of $L (C)$ and
$\DEC (\Pi)$ on a number of random decision vectors. If they 
agree on these decision vectors, we have a finite
automaton $C$ satisfying \ref{eqn:claim} for the considered
error and confidence parameters.
The approximation $L (C)$ is then intersected with the error decision
vectors $L (B)$. If the result is empty, an approximation to CFG $G$
has no feasible error path. The program is therefore probably approximately
correct. Otherwise, let $\be \in L (C) \cap L (B)$ be an error
decision vector. If it is a feasible decision vector, a
witness to an error in the CFG $G$ is found. Otherwise,
we ask the learning algorithm to modify $C$ by the
counterexample $\be$ (Section~\ref{section:verification}). }

\vspace{-2mm}
\section{PAC Automata Learning}
\label{section:learning}
Here we explain the PAC automata learning algorithm that we use to find an approximation to $\DEC(\Pi)$. Classical PAC automata learning algorithm cannot be used directly for the purpose of program verification. It has to be modified to handle the case when the program contains an error.
The classical PAC automata learning algorithm was obtained from modifying the requirement of the exact automata learning algorithm~\cite{angluin:queries1987}. Our modification follows the same route.
In this section, we first describe the classical ``exact'' automata learning algorithm of regular languages and then describe how to modify it for verification.
Then we explain how to relax the requirement of an exact automata learning algorithm to infer an approximation to $\DEC (\Pi)$.

\subsection{Exact Learning of Regular Languages}
Suppose $R$ is a \emph{target} regular language such that its description is
not directly accessible.
\emph{Automaton learning} algorithms~\cite{angluin:learning1987,rivest:inference1993,kearns:introduction1994,bollig:angluin2009} infer automatically a~finite automaton~$A_R$ recognizing $R$.
The setting of an online learning algorithm
assumes a~\emph{teacher} who has access to $R$ and can answer the
following two types of queries:
\begin{itemize}
\item Membership query $\MEM(w)$: is the word $w$ a~member of $R$, i.e., $w \in R$?
\item Equivalence query $\EQ(C)$: is the language of the finite automaton~$C$ equal to
$R$, i.e., $L (C) = R$?
If not, what is~a counterexample to this equality
  (a~word in the symmetric difference of $L(C)$ and $R$)?
\end{itemize}
The learning algorithm will then construct a finite automaton $A_R$ such
that $L (A_R) = R$ by interacting with the teacher.
Such an algorithm works iteratively: In each iteration, it
performs membership queries to get information about $R$ from the teacher.
Using the results of the queries, it proceeds by
constructing a~candidate automaton $C$ and, finally, makes an
equivalence query $\EQ (C)$. If $L (C) = R$, the algorithm terminates
with $C$ as the resulting finite automaton~$A_R$.
Otherwise, the teacher returns a word $w$ distinguishing $L (C)$ from the
target language $R$. The learning algorithm uses $w$ to
modify the conjecture for the next iteration.
The mentioned learning algorithms are guaranteed to find a~finite automaton~$A_R$
recognizing $R$ using a~number of queries polynomial to the
number of states of the minimal DFA recognizing $R$. In the rest of the paper, we denote ``online automata learning'' simply as ``automata learning''.

\subsection{Learning for Program Verification}
\label{section:modify_learning}
Under the context of program verification, it may be the case that $\DEC(\Pi)\cap L(B)\neq \emptyset$; in such a case, our procedure should return a feasible error path in the program. This is very similar to the setting
of \emph{learning-based
verification}~\cite{cobleigh:learning2003,chen:learning2009}, where the learning algorithm is
modified to return a~counterexample in case the system contains an error.
We modified the used learning algorithm in a similar way. 
To be more specific, when the classical learning algorithm poses an equivalence query $\EQ(C)$, we first check whether there exists a decision vector $c$ such that $c\in L(C)\cap L (B)$ and then test if $c\in \DEC(\Pi)$.
\begin{enumerate}
\item In case that the two tests identified a decision vector $c$ such that $c
 \in L(C)\cap L(B)$ and $c \not \in \DEC(\Pi)$, then $c$ is in the difference
 of $L(C)$ and $\DEC(\Pi)$ and hence a valid counterexample for the classical
 learning algorithm to refine the next conjecture automaton $C$. 
\item In case that the two tests identified a decision vector $c$ such that $c
 \in L(C)\cap L(B)$ and $c \in \DEC(\Pi)$, then $c$ is a feasible error
 decision vector and we report $c$ to the user.
\item In case that $L(C) \cap L(B) =\emptyset$, the modified learning
 algorithm poses an equivalence query $\EQ(C)$ to the teacher.
\end{enumerate}

Given a teacher answers membership and equivalence queries about $\DEC(\Pi)$, the modified automata learning algorithm has the following properties.

\begin{lemma}
Assume $\DEC(\Pi)$ is a regular set. The modified automata learning algorithm eventually finds a counterexample $c \in L(B)\cap \DEC(\Pi)$ when $L(B)\cap \DEC(\Pi)\neq \emptyset$. It eventually finds a finite automaton recognizing $\DEC(\Pi)$ when $L(B)\cap \DEC(\Pi)= \emptyset$.
\end{lemma}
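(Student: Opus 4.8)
The plan is to reduce the behaviour of the modified procedure to a~legal run of the classical exact learning algorithm, whose termination guarantee we may invoke because $\DEC(\Pi)$ is assumed regular and hence recognised by a~minimal DFA with some finite number~$n$ of states. By the cited result, that algorithm terminates with a~conjecture $C$ satisfying $L(C)=\DEC(\Pi)$ after a~number of refinement steps polynomial in~$n$, provided every membership answer is correct and every counterexample it is fed lies in the symmetric difference of $L(C)$ and $\DEC(\Pi)$. The teacher answers membership queries about $\DEC(\Pi)$ correctly by assumption, so on the learning side it only remains to check that the counterexamples supplied by the modified procedure are genuine.

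First I would verify that both sources of counterexamples are valid. A~counterexample produced in case~(1) is a~word $c\in L(C)\cap L(B)$ with $c\notin\DEC(\Pi)$, hence $c\in L(C)\setminus\DEC(\Pi)$, which lies in the symmetric difference; a~counterexample returned by the equivalence query in case~(3) is in the symmetric difference by definition. Consequently the sequence of conjectures $C_1,C_2,\dots$ built by the modified procedure forms a~valid run of the classical learner, so after finitely many refinements the learner would output a~conjecture $C^\ast$ with $L(C^\ast)=\DEC(\Pi)$, unless the procedure halts earlier through case~(2) or through the ``yes'' answer of the equivalence query in case~(3). This bounded-progress fact is the engine of the whole argument.

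The two claims then follow by a~case analysis on the emptiness of $L(B)\cap\DEC(\Pi)$. When $L(B)\cap\DEC(\Pi)\neq\emptyset$, the procedure can never halt through the ``yes'' branch of the equivalence query, since that branch is reached only after establishing $L(C)\cap L(B)=\emptyset$, whereas any $C$ with $L(C)=\DEC(\Pi)$ satisfies $L(C)\cap L(B)=\DEC(\Pi)\cap L(B)\neq\emptyset$; hence at the latest when the correct conjecture $C^\ast$ is produced the intersection test finds some $c\in L(C^\ast)\cap L(B)=\DEC(\Pi)\cap L(B)$, the membership test confirms $c\in\DEC(\Pi)$, and the procedure halts through case~(2) reporting a~feasible error $c\in L(B)\cap\DEC(\Pi)$. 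When $L(B)\cap\DEC(\Pi)=\emptyset$, case~(2) can never fire, as it would require $c\in L(C)\cap L(B)$ with $c\in\DEC(\Pi)$, i.e.\ $c\in L(B)\cap\DEC(\Pi)=\emptyset$; thus every nonempty-intersection iteration is a~case~(1) refinement with a~valid counterexample, and the procedure keeps refining until the learner outputs $C^\ast$ with $L(C^\ast)=\DEC(\Pi)$. For this $C^\ast$ we have $L(C^\ast)\cap L(B)=\emptyset$, so the procedure enters case~(3), poses the equivalence query, receives ``yes'', and halts with a~finite automaton recognising $\DEC(\Pi)$.

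The step I expect to be the main obstacle is the bounded-progress argument: justifying that interleaving case~(1) counterexamples with the occasional genuine equivalence query still constitutes a~single monotone run of the underlying learner, so that its polynomial termination bound applies unchanged. This rests on the learner's progress depending only on the validity of the counterexamples it consumes and not on their provenance, together with the finiteness of the minimal DFA for $\DEC(\Pi)$ granted by the regularity assumption. I would also note that the intersection-emptiness test $L(C)\cap L(B)$ and the exact equivalence query are available precisely because this lemma lives in the idealised regime where $\DEC(\Pi)$ is regular and the teacher answers exactly; the genuine PAC procedure replaces the equivalence query by sampling, which is treated separately in Section~\ref{section:sampling}.
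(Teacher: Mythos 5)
Your proof is correct and follows essentially the same route the paper takes: the paper states this lemma without a detailed proof, offering only the observation that the modified algorithm remains an exact automata learning algorithm (because the counterexamples it injects are genuine elements of the symmetric difference of $L(C)$ and $\DEC(\Pi)$), which is precisely the reduction you carry out. Your case analysis on $L(B)\cap\DEC(\Pi)$ and your justification that counterexample provenance does not affect the classical learner's termination bound supply the details the paper leaves implicit.
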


Observe that when the program does not contain any error, the behavior of the modified learning algorithm is identical to the classical one and hence is still an exact automata learning algorithm. Next we explain how to relax the requirements of the exact automata learning algorithm to obtain a PAC automata learning algorithm that is suitable for program verification.

\vspace{-0.0mm}
\subsection{\hspace{-0.2cm}Probably Approximately Correct Learning}\label{sec:pac-learning}
\vspace{-0.0mm}

The techniques for learning automata we just discussed in the previous section assume a teacher who has the ability to answer equivalence queries.
Such an assumption is, however, invalid in our procedure.
Checking $\DEC(\Pi) = \langof C$ can be undecidable.
\hide{Consider, for instance, a~finite state machine in a~black-box.
The teacher is in this case unable 
to answer equivalence queries since the structure of the
machine is hidden from him. Even in the case the teacher has access to the description
of the target language,
the complexity of answering equivalence queries can be prohibitive
(e.g., in the case of nondeterministic finite automata, it is
a~\textbf{PSPACE}-complete problem).
\hide{
Assuming such a teacher can be unrealistic sometimes.
}

Compared to equivalence queries, membership queries can be answered
rather easily even for black-box representations of the target language.
The teacher simply needs to send the inputs to the black-box and observe the output.
It is therefore natural to attempt to substitute equivalence testing by
observing results of membership queries of a set of sampled words in the two
regular languages.
This approach is clearly unsound---it is in general impossible to check
equivalence of a~pair of infinite sets by sampling a~finite number of elements.
On the other hand,}
Angluin showed in~\cite{angluin:queries1987}
that if we substitute equivalence queries with sampling, we can still make
statistical claims about the difference of the inferred set and the target set.

Assume that we are given a~probability distribution~$D$ over the elements of
a~universe $\mathcal{U}$, and a~hypothesis in the form $$\probof{w \in
\mathcal{U}|_D}{\neg\varphi(w)} \leq \epsilon.$$
In the hypothesis, the term $\probof{w \in \mathcal{U}|_D}{\neg\varphi(w)}$ denotes the
probability that the formula~$\varphi(w)$ is invalid for $w$ chosen randomly from
$\mathcal{U}$ according to the distribution $D$.
We call $\epsilon$ the \emph{error} parameter
and use the term \emph{confidence} to denote the
least probability that the hypothesis is correct.
We say that $\varphi(w)$ is $\pacofed$\emph{-valid} if $\probof{w \in
\mathcal{U}|_D}{\neg\varphi(w)} \leq \epsilon$ with confidence $\delta$.

In the setting of automata learning, the considered universe is $\Sigma^*$ and the target regular
language is $R \subseteq \Sigma^*$.
The task of an equivalence query $\EQ(C)$ is changed from checking
\emph{exact} equivalence, which we can express as checking that $\forall w \in \Sigma^*:
w \notin R \ominus \langof C$ (we use $\ominus$ to denote the symmetric difference operator), to
checking \emph{approximate equivalence}, i.e., checking whether the formula
$\varphi(w) = w \notin R \ominus \langof C$ is $\pacofed$-valid.
In other words, we check whether $\probof{w \in \Sigma^*|_D}{w \in R \ominus \langof C}
\leq \epsilon$ with confidence $\delta$.
For a~fixed $R$ and a~candidate~$C$, we say that $C$ is
$\pacofed$\emph{-correct} if $w \notin R \ominus \langof C$ is $\pacofed$-valid.

The teacher checks the $\pacofed$-correctness of $C$ by picking $r$ samples according
to $D$ and testing if all of them are not in $R \ominus \langof C$.
For the $i$-th equivalence query of the learning algorithm, the number of samples $q_i$ needed to establish that $C$ is
$\pacofed$-correct is given by Angluin in~\cite{angluin:learning1987} as
%
\begin{equation}
\label{eq:num-samples}
q_i = \left\lceil \frac{1}{\epsilon} \left(\ln \frac{1}{1-\delta}+i \ln 2 \right)\right\rceil .
\end{equation}
Since the inferred set $C$ is guaranteed to be $\pacofed$-correct, this approach is termed \emph{probably approximately correct}
(PAC) learning~\cite{valiant:theory1984}.

\hide{
\td{OL: OLD FOLLOWS}

Suppose $D$ is a~probability distribution over $\Sigma^*$ and suppose that we can pick
a~word from $\Sigma^*$ according to this distribution.
Let $R$ be a target set, $\epsilon$ the \emph{error}, and $\delta$ the \emph{confidence}
parameter.
Given an equivalence query $\EQ (C)$, the
response of the teacher is now of the form ``with probability at least~$\delta$, the sets $C$ and $R$ are identical with error at most $\epsilon$.'' 
Formally,
\begin{equation}
  \tag{Claim 1}
  \mathit{Prob}_{D} {[}R \ominus C{]} \leq \epsilon \textmd{ with
    confidence } \delta
  \label{eqn:claim}
\end{equation}
where $R \ominus C$ is the symmetric difference of $R$ and $C$ and
$\mathit{Prob}_{D} {[}E{]}$ is the probability that a sample word picked
from~$\Sigma^*$ according to the distribution $D$ is in $E$.
The teacher first determines the number $r$ of sample words needed to be tested
so that the claim holds with the given error and confidence parameters.
According to~\cite{angluin:learning1987}, $r$ can be computed from the error and confidence parameters and the number of equivalence queries that have been posed. 
If $C$ and $R$ agree on $r$ samples chosen according to the distribution $D$,
the teacher answers the equivalence query positively.
Otherwise, the teacher answers negatively and provides a~counterexample, which
can be used by the learning algorithm for inferring a more precise candidate.

The number of random samples
needed for the $j$-th candidate $C_j$ is~\cite{angluin:queries1987}: 
\begin{equation*}
  \lceil \frac{1}{\epsilon}(\ln \frac{1}{1-\delta} + j \ln 2) \rceil
\end{equation*}

To give some ideas about the number of samples, suppose we would like
to find a finite automaton with error $\epsilon = 0.1$ and confidence
$\delta = 0.01$ over a distribution $D$. If the first candidate
conforms to $8$ random samples on the distribution $D$, the first
candidate is statistically correct with respect to the parameters
$\epsilon$ and $\delta$. Otherwise, the second candidate need
conform to $14$ random sample for the same statistical correctness. 
The third, fourth, and fifth candidates require $21$, $28$, and $35$
random samples respectively. It does not take too many random samples
to attain $90\%$ of error with $99\%$ of confidence over the
distribution $D$. Random sampling can be a practical alternative when
answering equivalence queries is impossible.

way to substitute equivalence queries with a number of membership
queries and get a probably approximately correct (PAC)
guarantee. (Give an example of $D$) Assume a distribution $D$ of
strings in $R$, for the $i$-th equivalence query on a conjecture
automaton $C_i$, his algorithm samples $q_i=\lceil
\frac{1}{\epsilon}(\ln \frac{1}{1-\delta} + i \ln 2) \rceil$ strings
from $R$ according to the given distribution $D$, where $\delta$ and
$\epsilon$ are two user specified numbers for the confidence level and
error rate, respectively. 
For example, when $\delta=0.01$ and $\epsilon=0.1$, the number of membership queries needed for the $50$-th iteration $q_50 \approx 393$. 
The algorithm then checks if any sampled string are in the difference of $R$ and $L(C_i)$ by membership queries to the oracle and membership tests to the automaton $C_i$. If a string $s$ is found in the difference of $R$ and $L(C_i)$, it will be returned to the learning algorithm as a counterexample required by the equivalence query. 
Otherwise, the learning algorithm stops and conclude confidence $1-\delta$ that the difference of $L(C_i)$ and $R$ is smaller than $\epsilon$ according to the distribution $D$.
When $\delta=0.01$ and $\epsilon=0.1$, if the algorithm did not find any inconsistency with the $q_i$ queries, then we can conclude that with $99 \%$ confidence, difference between $L(C_i)$ and $R$ is smaller than $90 \%$.

Below we explain why PAC equivalence query can provide such a guarantee. 
We use $\mathbf{P}(L(C_i),R)_D$ to denote the difference between $L(C_i)$ and $R$ according to the distribution $D$. 
The possibility that the equivalence query makes a mistake at the $i$-th iteration, that is  $\mathbf{P}(L(C_i),R)_D > \epsilon$ but is not detected by the $q_i$ membership queries, is $(1-\mathbf{P}(L(C_i),R)_D)^{q_i}$, which is smaller than $(1-\epsilon) ^{q_i}$.

\begin{theorem}
The possibility that the equivalence query makes a mistake at any iteration is bounded by $\delta$.
\end{theorem}
\begin{proof}
The possibility that the equivalence query makes a mistake at any iteration is $\Sigma_{i=1}^{\infty}(1-\epsilon) ^{q_i}$.\\
(1) $\Sigma_{i=1}^{\infty}(1-\epsilon) ^{q_i} \leq \Sigma_{i=1}^{\infty} e^{-\epsilon q_i}$: When $\epsilon$ is small, $e^{- \epsilon} = 1 + 
\frac{(- \epsilon)^1}{1!} + \frac{(- \epsilon)^2}{2!}
+ \frac{(- \epsilon)^3}{3!}+ \frac{(- \epsilon)^4}{4!}+ \frac{(- \epsilon)^5}{5!}+\ldots$ 
(Taylor Series). The number $\frac{(- \epsilon)^{2j}}{(2j)!}
+ \frac{(- \epsilon)^{2j+1}}{(2j+1)!}$ is always positive for all $j\leq 1$.
Hence we have $(1-\epsilon) \leq e^{-\epsilon}$.\\
(2) $\Sigma_{i=1}^{\infty}e^{-\epsilon q_i} \leq \Sigma_{i=1}^{\infty} \frac{\delta}{2^i}$: $e^{-\epsilon q_i} = e^{-\epsilon \lceil \frac{1}{\epsilon}(\ln \frac{1}{\delta} + i \ln 2) \rceil} \leq e^{-\epsilon  \frac{1}{\epsilon}(\ln \frac{1}{\delta} + i \ln 2) } =
e^{ -\ln \frac{2^i}{\delta}}=
e^{ \ln \frac{\delta}{2^i}}= \frac{\delta}{2^i}$.\\
(3) $\Sigma_{i=1}^{\infty} \frac{\delta}{2^i} \leq \delta$.
\end{proof}

More precisely, we will select a
number of decision vectors from $L (S_G)$ 
\bytd{where $S_G$ is a finite automaton derived from the CFG $G$. One
must be careful in designing the sample language which decision
vectors are selected from.} 
\td{OL: say sth more}
The na\"ive universal sample language will not work because the
difference between the empty language and $\DEC (\Pi)$ is
\bytd{insignificantly small.}
\td{OL: what do you mean by ``insignificant''}
Section~\ref{section:sample-branch-decisions} explains
how to construct the finite automaton $S_G$ from~$G$.

After the sample language is determined, we design a sampling
algorithm to choose decision vectors randomly from $L (S_G)$. 
\bytd{Since different decision vectors can be selected with different
  probabilities, our}
sampling algorithm decides \td{OL: I don't understand} a distribution on $L (S_G)$. The
distribution in turn determines the quality of the inferred finite
automaton (see \ref{eqn:claim}).

For instance, suppose a
sampling algorithm only selects decision vectors of length $1$. Then
PAC learning does not have any guarantee on the difference of $L (C)$
and $\DEC (\Pi)$ for decision vectors of length greater than $1$. In
order to attain better statistical claim, it is necessary to sample
decision vectors of various lengths from $L (S_G)$ fairly
(Section~\ref{section:sampling}). 

Using the sampling algorithm, we test the conformance of $L (C)$ and
$\DEC (\Pi)$ on a number of random decision vectors. If they 
agree on these decision vectors, we have a finite
automaton $C$ satisfying \ref{eqn:claim} for the considered
error and confidence parameters.
The approximation $L (C)$ is then intersected with the error decision
vectors $L (B)$. If the result is empty, an approximation to CFG $G$
has no feasible error path. The program is therefore probably approximately
correct. Otherwise, let $\be \in L (C) \cap L (B)$ be an error
decision vector. If it is a feasible decision vector, a
witness to an error in the CFG $G$ is found. Otherwise,
we ask the $L^*$ algorithm to modify $C$ by the
counterexample $\be$ (Section~\ref{section:verification}). }




\vspace{-2mm}
\section{Resolving Equivalence Queries by Sampling}
\label{section:sampling}
The current section discusses how to design a mechanism that the
teacher can use for equivalence queries to provide the $\pacofed$-correctness
guarantee, as defined in Section~\ref{section:learning}.
Given a~probability distribution $D$ over the set of feasible decision vectors
$\DEC(\Pi)$, we can use~$D$ to give a~formal definition of the quality of
a~candidate automaton $C$.
In particular, we use as a~measure the probability with which a~decision vector
chosen randomly from $\DEC(\Pi)$ (according to the distribution $D$) is
contained in $C$.


A sampling mechanism offering such a distribution must satisfy the following conditions:
\begin{enumerate}
\item  Only decision vectors in $\DEC(\Pi)$ are sampled.
\item  The samples are \emph{independent and identically distributed} (IID),
  i.e., the distribution is fixed and the probability of sampling a particular
  element does not depend on the previously picked samples.
\end{enumerate}
In this paragraph, we introduce the \emph{random input sampling} mechanism.
We treat all nondeterministic choices and formal parameters of the program as input variables
and assume that all input variables are over finite domains.
Each set of initial values of input variables yields a path in the CFG
of the program.
Based on this observation, random input sampling works by
(1)~picking uniformly at random a~set of initial values for input variables of
the program and then
(2)~obtaining the corresponding decision vector by traversing the CFG of the
program using the picked values.
The sampling mechanism forms a~distribution such that the probability of
a~decision vector~$d$ being chosen is proportional to the number of program paths
corresponding to~$d$.

The issue of random input sampling is that it suffers from the well-known fact
that coverage of input values is not a good approximation of program path
coverage.
Depending on the sizes of input domains of program variables, some paths
might have only a~negligible probability of being selected---for instance,
given two 64-bit integers \texttt{x}~and~\texttt{y}, the probability of taking
the \emph{true} branch in the test \texttt{x~==~0~\&\&~y~==~0} is equal to~$2^{-128}$.
The situation gets even worse for input variables over unbounded domains.
Even with an extremely high coverage rate of input variables' values,
many paths may still not be explored, while other are explored
repeatedly.
In order to get a~sampling mechanism with a~better distribution over program
paths, we developed a~technique that randomly explores program's paths
using a~concolic
tester~\cite{godefroid:dart2005,sen:cute2005,burnim:heuristics2008}, which is
an efficient means for exploring decision vectors corresponding to rare paths.

We describe the technique and prove its properties in the rest of this section.

\subsection{Concolic Testing}

Concolic testing is a~testing approach that explores paths in the CFG of
a~program while searching for bugs.
The algorithm begins with a decision vector generated by randomly picked input values.
Then, it finds the next decision vector by flipping some decision made in the
chosen path and obtains new input values that lead the program execution
according to the new path.
This mechanism gives rare paths a much greater chance to be explored.
The selection of which decision should be flipped depends on the used \emph{search strategy} of the tester.

In our procedure, we use the concept of a~batched sample.
A~\emph{batched sample} is defined as a~set of decision vectors of the size~$k$
(where $k$ is a~given parameter) obtained from a~concolic tester by exploring
$k$~paths using its search strategy.
We denote $D_k$ the distribution over elements of~$(\Sigma^*)^k$ obtained in
this way.
Our procedure restarts the concolic tester after taking every batched sample.
The previous point gives us the guarantee that the probability of taking each
batched sample remains the same during the execution our procedure (we assume
that the concolic tester does not keep state information between its restarts),
and that the distribution is IID and, therefore, meets condition~2 defined above.
The principal functioning of concolic testers guarantees that condition~1 is
also met.

\vspace{-0.8mm}
\subsection{Generalized Stochastic Equivalence}\label{sec:gen-stoch-eq}
\vspace{-0.0mm}

In this section, we show that our sampling mechanism using batched samples has
the property required for the $\pacofed$-correctness guarantee of the
learning algorithm given in~Section~\ref{sec:pac-learning}.

Recall that for the set of feasible decision vectors of a~program $\DEC(\Pi)$
and a~candidate automaton $C$ inferred by the learning algorithm using some
distribution $D$ over $\Sigma^*$, if the teacher gives the answer \emph{yes}
for the equivalence query $\EQ(C)$, it guarantees with confidence~$\delta$ that
\begin{equation}
\vspace{-0.2mm}
  \probof{w\in \Sigma^*|_{D}}{w \in \DEC(\Pi) \ominus \langof{C}} \leq \epsilon .
\end{equation}
Since our sampling technique uses batched samples from the universe
$\mathcal{U}_k = (\Sigma^*)^k$ w.r.t.\ the distribution~$D_k$ instead of
elements of $\Sigma^*$ and distribution~$D$, we need to change the provided
guarantee in our modification of the learning algorithm.
If our algorithm answers \emph{yes}, it guarantees that
\begin{equation}
  \label{eq:mod-guarantee}
\vspace{-0.2mm}
  \probof{S\in \mathcal{U}_k|_{D_k}}{\exists w\in S: w \in \DEC(\Pi) \ominus \langof{C}} \leq \epsilon
\end{equation}
with confidence~$\delta$ (we hereafter use the term $\pacofed$-correct to
denote this form of guarantee).

When a teacher receives an equivalence query $\EQ(C)$, it uses a~concolic tester to obtain $q_i$ (given in~(\ref{eq:num-samples})) batch samples.
For each batch sample $S$, the teacher checks if there exists a decision vector
$w\in S$ such that $w \notin \langof{C}$ (by definition $w\in \DEC(\Pi)$).
The teacher answers \emph{yes} if there is no such $w$.
Otherwise, the teacher checks if $w$ is an error decision vector and either
reports $w$ as a~feasible error decision vector or returns $w$ to the learning
algorithm to refine the next conjecture.

The following lemma shows that if we use the number $q_i$ batched samples for testing the equivalence, we
obtain the modified $\pacofed$-correctness guarantee from~(\ref{eq:mod-guarantee}).

\begin{lemma}
\label{lemma:mlstar}
Let $\epsilon$ and $\delta$ be the error and confidence parameters, and $R$ be
the target language. If no decision vector $w \notin \langof C$ is found in the $q_i$ batched samples, then it holds that $C$ is $\pacofed$-correct. 

\end{lemma}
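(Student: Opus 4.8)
The plan is to bound the probability of error by the same kind of union-bound-over-iterations argument that Angluin uses, adapted to the batched-sample universe $\mathcal{U}_k$. The claim is essentially that the sample count $q_i$ in~(\ref{eq:num-samples}) is exactly calibrated so that, summed over all possible iterations, the total failure probability stays below $1-\delta$; this gives confidence~$\delta$ in the sense of~(\ref{eq:mod-guarantee}). I would first fix the right notion of ``failure.'' A failure at iteration~$i$ means that the candidate $C$ is \emph{not} $\pacofed$-correct---that is, $\probof{S\in \mathcal{U}_k|_{D_k}}{\exists w\in S: w \in R \ominus \langof{C}} > \epsilon$---yet all $q_i$ independently drawn batched samples miss this difference (none of them contains a witnessing $w$). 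Since only feasible decision vectors are sampled (condition~1) and $C$ is tested only for containment, the relevant event for each batched sample $S$ is $\exists w \in S: w \in R \ominus \langof C$, whose probability exceeds $\epsilon$ under the failure hypothesis.

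Next I would carry out the single-iteration bound. Because the $q_i$ batched samples are IID (condition~2, guaranteed by restarting the concolic tester), the probability that all $q_i$ of them avoid the $>\epsilon$-probability ``bad'' event is at most $(1-\epsilon)^{q_i}$. This is the step where the IID property of $D_k$ established in the previous subsection is doing the real work: it lets us multiply per-sample probabilities. I would then substitute the formula for $q_i$ and use the standard estimate $(1-\epsilon) \le e^{-\epsilon}$ (which holds termwise via the Taylor series, as the deleted theorem in the excerpt spells out) to obtain
\begin{equation*}
(1-\epsilon)^{q_i} \le e^{-\epsilon q_i} \le e^{-\epsilon \cdot \frac{1}{\epsilon}\left(\ln\frac{1}{1-\delta} + i\ln 2\right)} = (1-\delta)\,2^{-i} .
\end{equation*}

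Finally I would take the union over all iterations. Summing the per-iteration failure bound over $i = 1, 2, 3, \ldots$ gives total failure probability at most $\sum_{i=1}^{\infty} (1-\delta)\,2^{-i} = 1-\delta$, so the procedure answers \emph{yes} with a non-$\pacofed$-correct candidate with probability at most $1-\delta$; equivalently, whenever it answers \emph{yes} the returned $C$ is $\pacofed$-correct with confidence~$\delta$. I expect the main obstacle to be purely conceptual rather than computational: one must argue carefully that the lemma as stated (``if no $w\notin\langof C$ is found in the $q_i$ batched samples, then $C$ is $\pacofed$-correct'') is really the contrapositive of the union-bound statement, and that the $i\ln 2$ term in $q_i$ is precisely what makes the geometric series sum to $1-\delta$ across \emph{all} iterations rather than just one. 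The arithmetic itself is routine; the care lies in matching the event ``$\exists w\in S: w\in R\ominus\langof C$'' in~(\ref{eq:mod-guarantee}) to the per-sample failure event and in justifying that summing over an infinite number of potential iterations is what the confidence parameter is actually guarding against.
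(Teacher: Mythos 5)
Your proposal is correct and follows essentially the same route as the paper's own argument (which appears in the source only in abbreviated/commented form and is inherited from Angluin's analysis): bound the per-iteration failure probability of a non-$\pacofed$-correct candidate by $(1-\epsilon)^{q_i} \le e^{-\epsilon q_i} \le (1-\delta)2^{-i}$ using the IID property of $D_k$, then sum the geometric series over all iterations to get total failure probability at most $1-\delta$. Your additional observation---that because batched samples contain only feasible decision vectors, the tested event $\exists w\in S: w\notin\langof C$ coincides exactly with the bad event $\exists w\in S: w\in R\ominus\langof C$ of~(\ref{eq:mod-guarantee})---is precisely the adaptation the paper relies on when moving from $\Sigma^*$ with $D$ to $\mathcal{U}_k$ with $D_k$.
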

Based on the fact that $\langof C \cap \langof B =\emptyset$ (the property of the modified learning algorithm in Section~\ref{section:modify_learning}) and the lemma above, we obtain the following corollary.
\begin{corollary}
Let $\epsilon$ and $\delta$ be the error and confidence parameters, and $R$ be
the target language. If no decision vector $w \notin \langof C$ is found in $q_i$ batched samples, then it holds that the program is $\pacofed$-correct. 
\end{corollary}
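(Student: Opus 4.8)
The plan is to adapt Angluin's PAC argument to the batched-sample setting by bounding the probability that the teacher ever erroneously answers \emph{yes} over the entire run of the learning algorithm. First I would reduce the condition tested by the teacher to the event appearing in~(\ref{eq:mod-guarantee}). Since the random input sampling mechanism produces only decision vectors in $\DEC(\Pi)$ (condition~1 of the sampling mechanism), every $w$ occurring in a~batched sample $S$ satisfies $w \in \DEC(\Pi)$, so $w \in \DEC(\Pi) \ominus \langof C$ holds exactly when $w \notin \langof C$. Consequently, the hypothesis of the lemma---``no decision vector $w \notin \langof C$ is found in the $q_i$ batched samples''---coincides with ``none of the sampled batches witnesses the event $\exists w \in S : w \in \DEC(\Pi) \ominus \langof C$,'' which is precisely the event that~(\ref{eq:mod-guarantee}) constrains.

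Next I would bound the error probability at a~single equivalence query. Arguing by contraposition, suppose $C$ is \emph{not} $\pacofed$-correct, i.e., the probability $p = \probof{S \in \mathcal{U}_k|_{D_k}}{\exists w \in S : w \notin \langof C}$ exceeds~$\epsilon$. Because the batched samples are IID (condition~2), treating each batch as a~single trial, the probability that none of the $q_i$ independent batches drawn for the $i$-th equivalence query contains some $w \notin \langof C$ is $(1-p)^{q_i} < (1-\epsilon)^{q_i}$. Hence the teacher wrongly answers \emph{yes} at the $i$-th query with probability strictly less than $(1-\epsilon)^{q_i}$.

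Finally I would combine these per-query bounds by a~union bound over all equivalence queries. The probability that the teacher ever answers \emph{yes} while $C$ fails to be $\pacofed$-correct is at most $\sum_{i=1}^{\infty}(1-\epsilon)^{q_i}$. Using the elementary inequality $1-\epsilon \le e^{-\epsilon}$ together with the definition~(\ref{eq:num-samples}) of $q_i$, one obtains $e^{-\epsilon q_i} \le (1-\delta)\,2^{-i}$, so the series is bounded by $(1-\delta)\sum_{i=1}^{\infty}2^{-i} = 1-\delta$. Therefore, whenever the teacher answers \emph{yes}, the statement ``$C$ is $\pacofed$-correct'' can fail with probability at most $1-\delta$, i.e., it holds with confidence at least~$\delta$, which is exactly the claim.

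The hard part will be the union bound over the \emph{unboundedly many} equivalence queries: a~fixed per-query sample count would let the accumulated error diverge, and it is precisely the growth of $q_i$ through the $i\ln 2$ term that forces the geometric series to converge to $1-\delta$. By contrast, the passage from individual samples to batched samples is routine once the reduction in the first step identifies each batch with a~single Bernoulli trial of failure probability $p$, so Angluin's original reasoning transfers with batches in place of elements of $\Sigma^*$.
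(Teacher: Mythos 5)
Your PAC machinery is sound---the reduction of each batch to a single Bernoulli trial, the per-query bound $(1-p)^{q_i} < (1-\epsilon)^{q_i}$, and the union bound over all equivalence queries using $1-\epsilon \le e^{-\epsilon}$ and the definition of $q_i$ to get the geometric series summing to $1-\delta$ are all correct. But what this argument proves is Lemma~\ref{lemma:mlstar}, i.e., that the \emph{candidate automaton} $C$ is $\pacofed$-correct. The corollary claims something more: that the \emph{program} is $\pacofed$-correct, meaning (as made explicit later in the paper) that with confidence~$\delta$, the probability of sampling a batch containing a \emph{feasible error} decision vector is at most~$\epsilon$. Your closing sentence, ``the statement `$C$ is $\pacofed$-correct' \ldots{} which is exactly the claim,'' misidentifies the claim: PAC-correctness of $C$ by itself bounds only the mass of $\DEC(\Pi) \ominus \langof C$ and says nothing about assertion violations---a perfectly $\pacofed$-correct $C$ could still contain, or coincide with, a language full of error decision vectors.

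The missing ingredient is the property of the modified learning algorithm from Section~\ref{section:modify_learning}, which is exactly what the paper invokes to derive the corollary from the lemma: an equivalence query $\EQ(C)$ is posed to the teacher only after the algorithm has verified $\langof C \cap \langof B = \emptyset$. Given this disjointness, any feasible error decision vector $w \in \DEC(\Pi) \cap \langof B$ must satisfy $w \notin \langof C$, hence $w \in \DEC(\Pi) \setminus \langof C \subseteq \DEC(\Pi) \ominus \langof C$. Consequently the event ``the batched sample $S$ contains a feasible error decision vector'' is contained in the event ``$\exists w \in S : w \in \DEC(\Pi) \ominus \langof C$,'' whose probability your lemma-level argument bounds by $\epsilon$ with confidence~$\delta$. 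This containment is the step that transfers the statistical guarantee from the learned model to the program, and it is the entire content of the corollary beyond the lemma; without citing the disjointness $\langof C \cap \langof B = \emptyset$, the corollary simply does not follow from what you proved.
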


\hide{
Note that in the algorithm above, the probability of a decision vector being picked for the first call is proportional to the number of paths it corresponds to. However, the probability of the 2nd decision vector being picked is actually depending on the 1st decision vector. Sampling using concolic testing violates the i.i.d. requirement (Section~\ref{section:overview}) for a sampling mechanism and hence cannot be used directly. }

\hide{


Given the sample language $L (S_G)$ for a CFG $G$, we give a sampling
mechanism which offers certain statistical bounded path coverage
guarantee with our technique. Let $k$ be a bound of maximal sample
length. Our sampling mechanism consists of the following steps:
\begin{enumerate}
\item \label{algorithm:sampling} a number $r$ between $0$ and $k$ is
  chosen uniformly.
\item return a decision vector chosen uniformly from $L (S_G)^{=r}$ if
  it is not empty; otherwise, go to step~\ref{algorithm:sampling}.
\end{enumerate}
When $L$ is a regular language, uniformly choosing a word from
$L^{=r}$ can be done efficiently~\cite{bernardi:linear2012}. With our
sampling algorithm, a decision vector $\bd$ with length at most $k$ is
chosen with probability $\frac{1}{(k+1)|L (S_G)^{=|\bd|}|}$. Let us
denote the distribution on $L (S_G)$ by $D (S_G)$.

Let $\Pi$ be the set of feasible paths of the CFG $G$ and $C$ be the
conjecture automata from an equivalence query. With our sampling
mechanism, if an automaton candidate $C$ passes the random test, one
conclude that the difference between feasible decision vectors $\DEC
(\Pi)$ and $L (C)$ is at most $\epsilon$ of error with at least
$\delta$ of confidence on the distribution $D (S_G)$.
Formally, $\mathit{Prob}_{D (S_G)} {[}\DEC (\Pi) \ominus L (C){]} \leq
\epsilon \textmd{ with confidence at least } \delta $.

The sampling mechanism defines a probability distribution that each
non-empty length below the bound $k$ in $L (S_G)$ has an equal chance to
be picked. This has many advantages over other alternatives.
For example, we could select a decision vector in $L (S_G)^{\leq k}$
uniformly. Although it looks appealing, uniform selection on 
$L (S_G)^{\leq k}$ is problematic. Let us assume $L (S_G)^{\leq k}$
contains all decision vectors of length at most $k$ for argument's
sake. Observe that a half of the decision vectors in $L
(S_G)^{\leq k}$ are of length $k$; a quarter of $L (S_G)^{\leq k}$ are
of length $k-1$. 
\hide{Over $95\%$ of the words in 
$\Phi^{\leq k}$ are of length bigger than $k-5$. }
If a decision vector from $L (S_G)^{\leq k}$ is uniformly selected, 
it is more likely than not to have length close to $k$. Such a sampling
mechanism would not be representative for our purposes.
}
\hide{

In order to offer the kind of guarantee mentioned above, we need to create a
mechanism that samples uniformly at random words in $\DEC(\Pi)^{<k}$.
A naive approach is the following: we can sample uniformly at random words in
$(1+2)^{<k}$ by generating a random number $r$ between 1 and $|(1+2)^{<k}|$,
picking the $r$-th word in $(1+2)^{<k}$ (according to a predefined order), and
performing a membership query on the word to decide if the word is in
$\DEC(\Pi)^{<k}$.
If the word is not in $\DEC(\Pi)^{<k}$, the sampling mechanism drops it and
randomly picks another one.
The procedure is repeated until a word in $\DEC(\Pi)^{<k}$ is found.

However, this procedure might be inefficient.
Consider a program that executes a loop 10 times at the beginning and then
terminates.
Let $\Pi$ be the feasible paths of the program and pick $k=12$. Then only words
in $\DEC(\Pi)$ all prefixes of $1^{10}2$.
If we sample uniformly at random words in $(1+2)^{<12}$, a huge portion of them
are not in $\DEC(\Pi)$, i.e., only $12$ (the number of prefixes of $1^{10}2$)
out of the $2^{11}$ (all words of length smaller than $12$) words are correct
candidates.
Hence some heuristics to alleviate the problem is needed.

Our heuristics make use of the fact that $\DEC(\Pi)$ is prefix closed, i.e., if a word $w$ is in $\DEC(\Pi)$, all prefixes of $w$ are also in $\DEC(\Pi)$. Each time when we find a sample word $w$ that is not in $\DEC(\Pi)$, we perform a binary search to find the shortest prefix of $w$ that is not in $\DEC(\Pi)$, and store it in a set $N$. For the next sampling, we exclude all words and their suffixes in $N$ and hence reduce the chance of doing some redundant computation.
\subsection*{Efficient Procedure and Data Structure for Sampling}
To concretise the idea, below we describe a data structure to efficiently store
the set $N$ and use it for sampling.
The structure is a binary tree with numbers labeled on its nodes.
The label number denotes the total number of possible feasible branching
decisions below the node.
For each tree node, the edge to its left child is labeled $1$ and the one to
its right child is labeled $2$.
Initially, we have only the root node labeled with $2^k-1$.
Each time we find a shortest prefix $p$ that is not in $\DEC(\Pi)$ with length
$s$, we update the tree in three steps: (1) Add a branch to $p$ and all
corresponding new nodes to the tree and label all new added nodes with the
maximal number of possible feasible branching decisions below the node.
For example, a node of level $i$ (i.e., the vertical distance to the root is
$i$) has initially $2^{k-i}-1$ possible feasible branching decisions below it.
Note that we do not change the labels of all existing nodes in this step.
(2) For all nodes in this branch to $p$ (including the root node), we subtract
from their labeled number $2^{k-s}-1$, the number of words in $(1+2)^{<k}$ with
prefix $p$.
We do this because all words with prefix $p$ should be excluded from next
sampling.
After this step, the label of the node for $p$ is $0$, which means no feasible
branching decision with such a prefix.
(3) For all nodes with only one child, we also add the missing child and label
it with $2^{k-i}-1$, where $i$ is the level of the newly added child.
This step can simplify the procedure of sampling that we will present in the
next paragraph.

Each time when we want to sample a new word, we first generate a random number between $1$ and the label of the root, store the number in a variable $n$, and then start the sampling procedure from the root node. We refer to the node we working currently working at the \emph{current node} and use a variable $s$ to store the sample word. Initially, the current node is the root and $s$ is $\epsilon$. 
\begin{enumerate}
\item If the current node does not have any children, then we convert $n$ to a binary word $w$ (begin with the high digit), append the suffix of $w$ after the first occurrence of $1$ to $s$, and return $s$ as the sample word. For instance, if $n=10$, then $w=1010$ and the suffix of $w$ to be appended is $010$.
\item If $n$ is smaller than the label of the left child node, then we change the current node to the left child node and append $0$ to $s$. Otherwise, we subtract the label of the left child node from $n$, change the current node to the right child, and append $1$ to $s$.
\end{enumerate}

}

\vspace{-2mm}
\section{Resolving Membership Queries}
\label{section:membership}

In this section, we describe how a membership query $\MEM(\bd)$
in the algorithm in Figure~\ref{figure:flowchart-hl}
is discharged by the teacher.
Let $\Pi$ be the set of feasible paths of a~CFG~$G$. When the learning
algorithm asks a~membership query
$\MEM (\bd)$, the teacher needs to check whether the decision vector~$\bd$ is
in the set of feasible decision vectors $\DEC (\Pi)$.
To answer the query, the teacher first constructs a~path $\pi=\pthof{v_0, f_1, v_1,
f_2, v_2, \ldots,v_{m-1}, f_m, v_m}$ in $G$ such that 
\begin{itemize}
\vspace{-1.3mm}
\itemsep0mm
\item  there are exactly $|\bd|$ occurrences of branching nodes in
       the prefix $\pthof{v_0, f_1, v_1, f_2, v_2, \ldots,v_{m-1}}$ of $\pi$,

\item if $v_k$ is the $j$-th branching node in $\pi$, it holds that
  $\DEC(v_k,f_{k+1}) = \bd[j]$, and

\item $v_{m-1}$ is a branching node.
\vspace{-1.3mm}
\end{itemize}
Recall that $\pi$ is a~feasible path if and only if $\varphi = \bigwedge_{j=1}^m
\lift{f_j}{j}$ is satisfiable.
Therefore, the teacher can simply construct the formula $\varphi$ from the path
$\pi$ and check its satisfiability using an off-the-shelf constraint solver.

Alternatively, the teacher can check feasibility by translating the path into a
sequence of program statements (with conditions removed and substituted by
assertions on the values of the conditions) and asking a~symbolic executor or software model checker whether the final line of the constructed program is reachable.
The alternative option is easier to implement but usually suffers from some performance penalty.


\hide{
Note that other paths can also be mapped to the same branching decision as $\pi$.
One may wonder that why it is sufficient to check just the path $\pi$. The main reason is that all such paths are with the prefix $\pi$ and also with $v_{m-1}$ as the last branching node. Since all edges from sequential nodes are non-blocking, the edge $(v_{m-1},f_m, v_m)$ is the last possibility to make those paths infeasible. Therefore if $\pi$ is feasible, then all such paths are also feasible. 
}

\vspace{-2mm}
\section{Error Decision Vectors}
\label{section:error-branch-decisions}
Let $\badpaths$ be the set of \hide{(not necessarily feasible)} error paths in
a~CFG.
In this section, we show how we construct a finite automaton accepting the
set of all error decision vectors $\DEC (\badpaths)$ of the given CFG.
This automaton will later be intersected with the automaton representing the
set of feasible paths to determine whether the CFG contains a~feasible error
path.

\begin{definition}
Let $G = (V_b \cup V_s, E, v_i, v_r, V_e, \fp)$ be a~CFG.
We define the \emph{error trace automaton} for $G$ as the finite automaton $B =
(\{0, 1\}, V_b \cup V_s, v_i, \Delta_E, V_e)$ where $\Delta_E$ is defined as follows:
\begin{itemize}
\vspace{-1mm}
\itemsep0mm
\item $(v, 0, v'_0) \in \Delta_E$ if $v \in V_b\setminus V_e$, $(v, f_0, v'_0) \in E$,
  and $v'_0$ is the $0$-successor of $v$;
\item $(v, 1, v'_1) \in \Delta_E$ if $v \in V_b\setminus V_e$, $(v, f_1, v'_1) \in E$,
  and $v'_1$ is the $1$-successor of $v$;
\item $(v, \lambda, v') \in \Delta_E$ if $v \in V_s\setminus V_e$ and $(v, f, v') \in
  E$; and
\item $(v, 0, v), (v, 1, v) \in \Delta_E$ if $v \in V_e$.
\end{itemize}
\end{definition}
Informally, $B$ contains a~state for every node and a~transition for every
edge of $G$.
It reads a symbol in each state corresponding to a~branching node and performs
$\lambda$-transitions for states corresponding to sequential
nodes. For every error node, $B$ reads all remaining symbols
and accepts the input word. It is straightforward to see that
$B$ accepts exactly the set of decision vectors corresponding to
error paths in $G$.
\begin{lemma}
  Let $G = (V, E, v_i, v_r, V_e, \fp)$ be a CFG and $\badpaths$ the set
  of error paths in $G$. Let $B$ be the error trace automaton for $G$. It holds that $L (B) = \DEC (\badpaths)$.  
\end{lemma}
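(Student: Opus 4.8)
The plan is to prove the language equality $\langof B = \DEC(\badpaths)$ by establishing the two inclusions separately, each by a straightforward induction that exploits the tight structural correspondence between the CFG $G$ and the error trace automaton $B$. The key observation driving both directions is that the transition relation $\Delta_E$ is defined so that states of $B$ are literally the nodes $V_b \cup V_s$ of $G$, branching edges of $G$ become symbol-reading transitions (on $0$ or $1$), sequential edges become $\lambda$-transitions, and every error node is made into a self-looping accepting sink that absorbs all remaining symbols.

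For the inclusion $\DEC(\badpaths) \subseteq \langof B$, I would take an arbitrary error path $\pi = \pthof{v_0, f_1, v_1, \ldots, f_m, v_m}$ with $v_j \in V_e$ for some $0 \le j \le m$, and argue that the prefix $\pthof{v_0, f_1, \ldots, f_j, v_j}$ induces an accepting run of $B$ on $\DEC$ of that prefix: at each branching node the defining clauses for $(v,0,v'_0)$ and $(v,1,v'_1)$ provide exactly the transition matching $\DEC(v_k, f_{k+1})$, and at each sequential node the $\lambda$-transition clause lets $B$ advance without consuming a symbol, mirroring the fact that $\DEC$ emits $\lambda$ there. This run ends in the error state $v_j \in V_e$, which is accepting. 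Since $v_j$ carries self-loops on both $0$ and $1$, and since $\DEC$ of the full path $\pi$ equals $\DEC$ of the prefix concatenated with the decisions taken after $v_j$, $B$ can consume the remaining symbols of $\DEC(\pi)$ by looping at $v_j$ and still accept, giving $\DEC(\pi) \in \langof B$.

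For the reverse inclusion $\langof B \subseteq \DEC(\badpaths)$, I would start from an accepting run of $B$ on a word $w$ and reconstruct a corresponding error path in $G$. Reading the run from $q_i = v_i$, each non-$\lambda$ transition taken before first entering an error node corresponds (by the definition of $\Delta_E$ restricted to $V_b \setminus V_e$) to following a branching edge of $G$ whose decision label matches the consumed symbol, and each $\lambda$-transition corresponds to following the unique sequential edge; these together trace out a genuine path $\pi$ in $G$ reaching the first error node, with $\DEC(\pi) = w'$ for the prefix $w'$ of $w$ consumed up to that point. Because the accepting run must end in a state of $V_e$ and the only transitions leaving an error state are the self-loops, the suffix of $w$ after reaching $V_e$ is consumed entirely by self-loops; this suffix can be realized by appending an arbitrary continuation of $\pi$ within $G$ whose further decisions spell out that suffix, yielding an error path whose decision vector is exactly $w$.

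The main obstacle I anticipate is handling the error-node self-loops cleanly in both directions, since they decouple the \emph{length} of the accepted word from the length of the path's branching prefix: an error path may continue past the first error node, and $B$ may read symbols after reaching $V_e$, so one must be careful that the decision vector of the \emph{whole} path (not merely the prefix up to the first error node) is what gets matched. I would address this by defining the run and the path in lockstep, treating the first visit to $V_e$ as the pivot and arguing that everything afterward is freely matchable by the self-loops and by the freedom to extend the path in $G$. A minor secondary point is that the recursive definition of $\DEC$ erases sequential steps as $\lambda$, so the induction must be phrased over the branching nodes of the path rather than over raw path length; making the inductive hypothesis speak about the number of branching nodes traversed keeps the $\lambda$-transitions from cluttering the argument.
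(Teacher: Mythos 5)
The paper itself offers no real proof of this lemma (only the remark preceding it that the fact is ``straightforward to see''), so the substance of the review is whether your formalization is sound---and it is not, in one specific place. Your first inclusion $\DEC(\badpaths) \subseteq \langof{B}$ is correct once the pivot is taken at the \emph{first} visit to $V_e$, as you yourself note in your closing paragraph. The gap is the final step of the converse inclusion: the claim that the suffix of $w$ consumed by the self-loops ``can be realized by appending an arbitrary continuation of $\pi$ within $G$ whose further decisions spell out that suffix.'' Nothing in the definition of a CFG guarantees this. The self-loops at an accepting error state allow \emph{every} word of $\{0,1\}^*$ as a suffix, whereas the decision vectors of continuations of $\pi$ in $G$ form a constrained set dictated by the graph structure, possibly just $\{\lambda\}$. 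Concretely, take $V = V_s = \{v_i, v_r\}$, $V_b = \emptyset$, $E = \{(v_i, f, v_r)\}$ for an arbitrary transition formula $f$, and $V_e = \{v_r\}$. The only error path is $\pthof{v_i, f, v_r}$, so $\DEC(\badpaths) = \{\lambda\}$, yet $B$ has the $\lambda$-transition $(v_i, \lambda, v_r)$ and accepting self-loops on $v_r$, so $\langof{B} = \{0,1\}^*$. The same failure occurs whenever an error node is the return node, or can reach no branching node.

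This also means the defect is not one you can patch with a more careful induction: the equality stated in the lemma is simply false for such CFGs, and your argument silently assumes exactly the missing hypothesis, namely that from every error node every word over $\{0,1\}$ is the decision vector of some continuation path. What does hold is the weaker pair of facts that the paper's procedure actually relies on: (i) $\DEC(\badpaths) \subseteq \langof{B}$ (your first direction), and (ii) for every word $w$ that is the decision vector of \emph{some} path of $G$, we have $w \in \langof{B}$ if and only if $w \in \DEC(\badpaths)$. Point (ii) follows from determinism of the unfolding: sequential nodes have unique successors and branching nodes have one successor per symbol, so if $B$ accepts $w = d_1 w''$ with $d_1$ leading the run to its first error state $v_e$, then any path of $G$ with decision vector $w$ (or, when $w'' = \lambda$, its extension along the unique chain of sequential edges) must pass through $v_e$ and hence is an error path. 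Since the verification procedure only ever intersects $\langof{B}$ with languages approximating $\DEC(\Pi)$, this weaker statement suffices for everything downstream; but as written, both the lemma and your proof of it overclaim.
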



In Section~\ref{section:function}, we describe an extension of our procedure to
programs with procedure calls.
Because representing the set of error decision vectors using a~finite automaton
is in this setting imprecise, the section also discusses an extension that
represents the set of error paths in a~program with procedure calls using
pushdown automata.

\vspace{-2mm}
\section{The Main Procedure}
\label{section:verification}
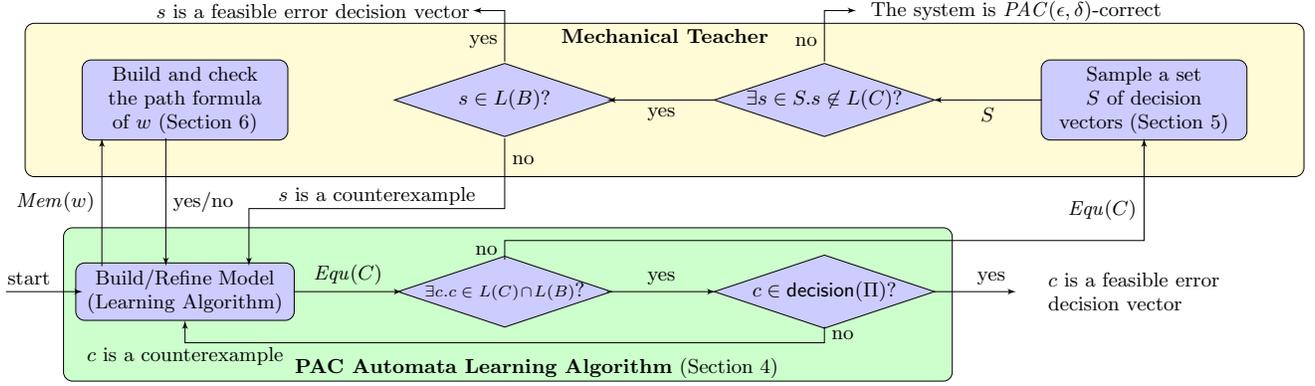
\begin{figure*}[t]
\scalebox{0.85}{

\begin{tikzpicture}[node distance = 2cm, auto]
\tikzstyle{decision} = [diamond, draw, fill=blue!20, aspect=3,text width=2.5cm, text badly centered, inner sep=0pt]
\tikzstyle{block} = [rectangle, draw, fill=blue!20, 
     text centered, rounded corners, minimum height=0.8cm]

\tikzstyle{line} = [draw, -latex']

  \draw[rounded corners,fill=green!20] (-1.9cm,1cm) rectangle +(13.9cm,-2.4cm);
	\node at (5.5,-1.2) {\textbf{PAC Automata Learning Algorithm} (Section~\ref{section:learning})};

    \node [block,text width=3cm] at (0,3) {Build and check the path formula of $w$ (Section~\ref{section:membership})};

	\node [rectangle, rounded corners, draw, fill=yellow!20,minimum width=20cm,minimum height=2.4cm] at (7.5,3){};
	\node at (7.5,4) {\bf Mechanical Teacher};

    \node [block,text width=3cm] at (0,3) {Build and check the path formula of $w$ (Section~\ref{section:membership})};

    \node [block,text width=3cm] at (15,3) {Sample a set $S$ of decision vectors (Section~\ref{section:sampling})};

	\node [decision]   at (10,3) {$\exists s\in S.  s \not\in L(C)$?};

	\node [decision]   at (5,3) {$s \in  L(B)$?};

    \node [block,text width=3.2cm] at (0,0) {Build/Refine Model (Learning Algorithm)};

	\node at (0, -1){$c$ is a counterexample};
	\node at (3.0, 1.5){$s$ is a counterexample};
	\node at (13,4.4){The system is $\pacofed$-correct};
	\node at (2,4.4){$s$ is a feasible error decision vector};

    \node [decision]  at (5,0) {{\scriptsize $\exists c. c\in L(C)\cap L (B)$}?\ };
	\node [decision]   at (10,0) {$c \in \DEC (\Pi)$?};
    \node [text width=3cm] at (15,0) {$c$ is a feasible error decision vector};
    \path [line] (-1.3,0.4) -- node {$\MEM(w)$} (-1.3,2.4);
    \path [line] (-0.3,2.4) -- node {yes/no} (-0.3,0.4);

    \path [line] (1.7,0) -- node {$\EQ(C)$} (3.4,0);
    \path [line] (6.6,0) -- node {yes} (8.3,0);

    \path [line] (10,-0.55) -- node {no} (10,-0.8) -- (0,-0.8) -- (0, -0.45);
    \path [line] (11.7,0) -- node[pos=0.7] {yes} (13,0);
    \path [line] (5,0.5) -- node {no} (5,0.8)--(15,0.8)--node[pos=0.3] {$\EQ(C)$}(15,2.4);

    \path [line] (13.4,3) -- node {$S$} (11.7,3);

    \path [line] (10,3.6) -- node[pos=0.4] {no} (10,4.4) -- (10.5,4.4);

    \path [line] (-2.8,0) -- node[pos=0.3] {start}  (-1.6,0);

    \path [line] (5,3.6) -- node[pos=0.4] {yes} (5,4.4) -- (4.5,4.4);
    \path [line] (8.3,3) -- node {yes} (6.6,3);
    \path [line] (5,2.4) -- node[pos=0.3] {no} (5,1.3) -- (1, 1.3)--(1,0.4);

\end{tikzpicture}
}
\vspace{-2mm}
\caption{A~detailed flow chart of our verification procedure}
\label{figure:flowchart}
\vspace{-5mm}
\end{figure*}

We summarize our procedure in this section. 
Let~$G$ be the CFG of the verified program, $k$ be the size of a batched
sample, $\epsilon$ be the error parameter, and $\delta$ be the confidence
parameter.
The goal of our procedure is to 
either find a feasible error decision vector of $G$ or show that $G$ is
$\pacofed$-correct.
In the latter case, we also
accompany our answer with a~$\pacofed$-correct regular representation
of the set of feasible decision vectors of $G$.
Let $\Pi$~be the set of feasible paths of~$G$, $D_k$ be the distribution defined
by our sampling mechanism
(cf.~Section~\ref{section:sampling}), and $\langof B$ be the
set of error decision vectors of $G$
(cf.~Section~\ref{section:error-branch-decisions}).

A~detailed
flow chart of our procedure can be found in Figure~\ref{figure:flowchart}.
First, the bottom part of the figure describes our learning algorithm.
We extend the online automata learning algorithm with two additional
tests for verification, as~described in Section~\ref{section:modify_learning}.
In particular, when the automata learning algorithm outputs a~candidate~$C$,
before sending teacher the equivalence query~$\EQ(C)$,
we first test whether $\langof C$ contains a~\emph{feasible} error decision vector~$c$.
In~case it does, we report~$c$ as an error.
Otherwise, in the case $c$ is both in $\langof C$ and $\langof B$ but is not feasible,
we return~$c$ to the learning algorithm to further refine the conjecture.


The top part of the figure describes our design of a~mechanical teacher.
The task of the teacher is to answer queries from the learning algorithm.
Membership queries of the form~$\MEM(w)$ can be answered by constructing the
path corresponding to the decision vector $w$ and the associated path formula,
which is then solved using a~constraint solver 
(cf.~Section~\ref{section:membership}).
Equivalence queries, on the other hand, are discharged using a~concolic tester
by checking whether there is a~decision vector~$s$ in the set of batched
samples~$S$ such that it does not belong to the language of~$C$
(cf.~Section~\ref{section:sampling}).
If no such a~decision vector exists, we conclude that the program is
$\pacofed$-correct.
Otherwise, we test whether $s \in \langof B$; if this holds, we report
that we have found a~feasible error decision vector.
In the case $s \notin \langof B$, it holds that $s$ is a~feasible decision
vector in $\DEC(\Pi)$ but not in the language of the current conjecture
$\langof C$.
If this happens, we return $s$ to the automata learning algorithm to refine the
conjecture and continue with the next iteration of the learning loop.

In general, our procedure is not guaranteed to terminate.
When the procedure terminates and reports an error (either by the teacher or the learning algorithm), a feasible
error decision vector is found and the program is reported to be incorrect.
If the teacher approves an approximate finite automaton $C$, our procedure
reports that $C$ is an approximate
model of $\DEC(\Pi)$ w.r.t.\ the $\pacofed$-correctness guarantee, which, in turn, implies that the program is $\pacofed$-correct.
From Lemma~\ref{lemma:mlstar}, we have the following theorem

\begin{theorem}
\label{theorem:procedure}
Let $\epsilon$ and $\delta$ be the error and confidence parameters respectively
and $\Pi$ be the set of feasible paths of a~given CFG~$G$.
If our procedure terminates with an approximate finite automaton~$C$, the
program is $\pacofed$-correct.
\end{theorem}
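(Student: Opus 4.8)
The plan is to derive the theorem from the two invariants that necessarily hold at the moment the procedure terminates with a candidate automaton~$C$: first, that $\langof{C} \cap \langof{B} = \emptyset$, and second, that $C$ is $\pacofed$-correct in the sense of Lemma~\ref{lemma:mlstar}. The proof is then essentially a set-theoretic inclusion followed by monotonicity of probability.

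First I would pin down the termination condition. According to the flow chart in Figure~\ref{figure:flowchart} and the modified learning algorithm of Section~\ref{section:modify_learning}, the procedure outputs an approximate automaton~$C$ only when the teacher answers \emph{yes} to an equivalence query $\EQ(C)$; by Lemma~\ref{lemma:mlstar} this guarantees that $C$ is $\pacofed$-correct, i.e.,
\begin{equation*}
  \probof{S\in \mathcal{U}_k|_{D_k}}{\exists w\in S: w \in \DEC(\Pi) \ominus \langof{C}} \leq \epsilon
\end{equation*}
with confidence~$\delta$. Crucially, an equivalence query is only posed after the two additional tests of Section~\ref{section:modify_learning} have established $\langof{C} \cap \langof{B} = \emptyset$: had the intersection been nonempty, the procedure would instead have either reported a feasible error decision vector or returned a counterexample to refine~$C$, and would not have terminated with~$C$.

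The heart of the argument is the inclusion $\DEC(\Pi) \cap \langof{B} \subseteq \DEC(\Pi) \ominus \langof{C}$. Indeed, let $w$ be any feasible error decision vector, so $w \in \DEC(\Pi)$ and $w \in \langof{B}$. Since $\langof{C} \cap \langof{B} = \emptyset$ and $w \in \langof{B}$, we must have $w \notin \langof{C}$; hence $w \in \DEC(\Pi) \setminus \langof{C} \subseteq \DEC(\Pi) \ominus \langof{C}$. Consequently, for every batched sample $S$ the event \emph{``$S$ contains a feasible error decision vector''} implies the event \emph{``$S$ contains an element of $\DEC(\Pi) \ominus \langof{C}$''}.

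Finally I would invoke monotonicity of probability under this event inclusion to obtain
\begin{align*}
  \probof{S\in \mathcal{U}_k|_{D_k}}{\exists w\in S: w \in \DEC(\Pi) \cap \langof{B}}
  &\leq \probof{S\in \mathcal{U}_k|_{D_k}}{\exists w\in S: w \in \DEC(\Pi) \ominus \langof{C}} \\
  &\leq \epsilon
\end{align*}
with confidence~$\delta$, which is exactly the statement that the program is $\pacofed$-correct. The step I expect to require the most care is the first one: rigorously justifying that termination with~$C$ forces $\langof{C} \cap \langof{B} = \emptyset$, since this depends on correctly tracing the control flow of the modified algorithm rather than on any calculation. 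The probabilistic part is then routine, as the same confidence~$\delta$ carries over verbatim because we only shrink the error event.
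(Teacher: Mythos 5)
Your proposal is correct and follows essentially the same route as the paper, which derives the theorem from Lemma~\ref{lemma:mlstar} together with the disjointness $\langof{C} \cap \langof{B} = \emptyset$ enforced by the modified learning algorithm of Section~\ref{section:modify_learning} (this is exactly the content of the corollary following that lemma). Your write-up merely makes explicit the two steps the paper leaves implicit---the inclusion $\DEC(\Pi) \cap \langof{B} \subseteq \DEC(\Pi) \ominus \langof{C}$ and the monotonicity of probability under event inclusion---which is a faithful elaboration rather than a different argument.
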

Moreover, we obtain the following corollary.
\begin{corollary}


Suppose our procedure reports a program $P$ is $\pacofed$-correct.
If we run the concolic tester with the same search strategy and batch size used
in our procedure on~$P$, with confidence~$\delta$, the concolic tester will
find an error with a~probability less than~$\epsilon$.
\end{corollary}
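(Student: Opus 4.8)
The plan is to reduce the statement to the batched $\pacofed$-correctness guarantee already certified for the candidate automaton $C$ that the procedure outputs, and to observe that the concolic tester's ``find an error'' event is a sub-event of the symmetric-difference event controlled by that guarantee. First I would unpack what ``reports $P$ is $\pacofed$-correct'' means operationally: the procedure terminated only because the teacher \emph{approved} a conjecture $C$. By the structure of the modified learning algorithm (Section~\ref{section:modify_learning}), approval is reached only after the test $\langof C \cap \langof B = \emptyset$ succeeds, so the accepted model contains no error decision vector. Moreover, by Lemma~\ref{lemma:mlstar}, acceptance certifies that $C$ is $\pacofed$-correct in the batched sense of~(\ref{eq:mod-guarantee}), with confidence~$\delta$, where the relevant distribution is exactly the batched-sample distribution $D_k$ induced by the concolic tester under the fixed search strategy and batch size~$k$.

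The key observation is that a single independent run of the tester under the same search strategy and batch size draws exactly one batched sample $S$ from $D_k$, and it ``finds an error'' precisely when $S$ contains a feasible error decision vector, i.e.\ when $\exists w\in S\colon w\in\DEC(\Pi)\cap\langof B$. I would then show this event is contained in the event appearing in~(\ref{eq:mod-guarantee}). Every sampled $w$ lies in $\DEC(\Pi)$ by condition~1 of the sampling mechanism (Section~\ref{section:sampling}); a feasible error vector $w$ therefore satisfies $w\in\DEC(\Pi)$ and $w\in\langof B$, and since $\langof C\cap\langof B=\emptyset$ such a $w$ cannot be in $\langof C$, whence $w\in\DEC(\Pi)\setminus\langof C\subseteq\DEC(\Pi)\ominus\langof C$. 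Consequently
\begin{equation*}
  \{S \mid \exists w\in S\colon w \in \DEC(\Pi)\cap\langof B\}
  \subseteq
  \{S \mid \exists w\in S\colon w \in \DEC(\Pi)\ominus\langof C\}.
\end{equation*}
Monotonicity of probability over $D_k$ then bounds the probability that a fresh batch contains a feasible error by the left-hand side of~(\ref{eq:mod-guarantee}), hence by~$\epsilon$. Because the event inclusion is purely set-theoretic, the confidence~$\delta$ attached to the $\pacofed$-correctness claim transfers verbatim, yielding ``with confidence~$\delta$, the tester finds an error with probability at most~$\epsilon$''.

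The main obstacle — indeed essentially the only nontrivial point — is justifying that the distribution governing an \emph{independent} rerun of the tester coincides with the $D_k$ used inside the procedure. This rests on the assumptions in Section~\ref{section:sampling}: the tester is restarted before each batch and keeps no state across restarts, so each batch is an IID draw from a fixed $D_k$ determined solely by the search strategy and~$k$. Granting this, the corollary is immediate from the event inclusion and monotonicity; in particular no union bound over learning iterations is needed, since the claim concerns a single batch of a single rerun rather than the cumulative behaviour of the whole procedure.
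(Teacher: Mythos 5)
Your proposal is correct and follows essentially the same route the paper intends: the paper states this corollary without an explicit proof, deriving it implicitly from Lemma~\ref{lemma:mlstar} (the batched $\pacofed$-correctness guarantee of~(\ref{eq:mod-guarantee}) over the distribution $D_k$), the disjointness $\langof C \cap \langof B = \emptyset$ enforced by the modified learning algorithm, and the IID-via-restarts property of the sampler --- exactly the three ingredients you assemble. Your event-inclusion argument, showing that any feasible error vector in a fresh batch must lie in $\DEC(\Pi) \setminus \langof C \subseteq \DEC(\Pi) \ominus \langof C$, is a faithful and properly detailed filling-in of the paper's implicit reasoning.
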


Thanks to the properties of the modified automata learning algorithm, when $\DEC(\Pi)$ is a regular set, our algorithm is guaranteed to terminate and either (1) return a~counterexample $c \in L(B)\cap \DEC(\Pi)$ or (2) find an approximate model of $\DEC(\Pi)$ that is disjoint with $L(B)$.

\hide{
We design a mechanical teacher to answer queries from the learning
algorithm. If the mechanical teacher reports an error, a feasible
error decision vector is found. The CFG is incorrect. If the teacher
finds an approximate finite automaton $C$, the CFG
is correct with respect to the parameters $\epsilon$ and $\delta$.

\begin{algorithm}
  \KwIn{$G$ : a CFG; $0 < \epsilon < 1$ : error; $0 < \delta < 1$ :
    confidence} 
  run the $L^*$ algorithm with equivalence queries resolved by
  Algorithm~\ref{algorithm:equivalence-query}\;
  \uIf{Algorithm~\ref{algorithm:equivalence-query} reports ERROR}
  {
    $G$ is incorrect
  }
  \uElse
  {
    $G$ is correct
  }
  \vspace{1em}
  \caption{Main}
  \label{algorithm:main}
\end{algorithm}

The $L^*$ algorithm assumes a teacher to answer membership and
equivalence queries. For an membership query $\MEM (\bd)$
with a decision vector $\bd$, we use the algorithm in
Section~\ref{section:membership} to resolve the query. For an
equivalence query $\EQ (C)$, we test the candidate finite automaton
$C$ on a number of decision vectors chosen from $L (S_G)$ by
our sampling algorithm (Section~\ref{section:sample-branch-decisions}
and~\ref{section:sampling}). If there is any discrepancy between $\DEC
(\Pi)$ and $L (C)$, the sample is returned as a counterexample to the
learning algorithm. Otherwise, $C$ is probably approximate $\DEC
(\Pi)$. We then check if $L (C) \cap L (E_G) = \emptyset$. If so, the
CFG $G$ is probably 
approximately correct. Otherwise, note that $L (C)$ is but an
approximation to $\DEC (\Pi)$. We check if
$\be \in L (C) \cap L$ corresponds to a feasible path by a membership
query. If so, $G$ has an error. Otherwise, $\be$ is returned as a
counterexample (Algorithm~\ref{algorithm:equivalence-query}). 

\begin{algorithm}
  \KwIn{$\EQ (C_i)$ : the $i^{\textmd{th}}$ equivalence query with a
    finite automaton $C_i$; $k$ : batch size for concolic tester}
  \KwOut{$\mathit{YES}$ if $G$ is probably approximately correct;
    $\mathit{CE} (\be)$ if there is a counterexample $\be$}

  \If{$L (C_i) \cap L (E_G) \neq \emptyset$}
  {
	  pick $\be \in L (C_i) \cap L (E_G)$\;
	  \If{$\MEM (\be)$ is $\mathit{YES}$}
	  {
	    \textbf{report} ERROR
	    \textbf{else}
	    \textbf{return} {$\mathit{CE} (e)$}
	  }
  }

  $r_i \leftarrow \lceil \frac{1}{\epsilon}(\ln \frac{1}{1-\delta} + i
  \ln 2) \rceil$\;
  \For{$j \leftarrow 1$ \KwTo $r_i$}
  {
    obtain a set of decision vectors $S$ of size $k$ from a concolic tester\;
    \For{$\bd\in S$}{
	    \uIf{$\MEM (\bd)$ is $\mathit{NO}$ but $\bd \in L (C)$, or
	         $\MEM (\bd)$ is $\mathit{YES}$ but $\bd \not\in L (C)$}
	    {
	      \Return{$\mathit{CE} (\bd)$}
	    }
    }
  }
  \Return{$\mathit{YES}$}

  \vspace{1em}
  \caption{Resolving Equivalence Queries}
  \label{algorithm:equivalence-query}
\end{algorithm}

Note that the number of samples from $L (S_G)$ varies on
different equivalence queries. For the $i^{\textmd{th}}$ candidate
finite automaton $C_i$ with $\mathit{Prob}_{D (S_G)} [\DEC (\Pi) \ominus
L (C_i)] > \epsilon$, the probability of $C_i$ passing the $r_i$
random tests is less than
\begin{eqnarray*}
  (1 - \epsilon)^{r_i} 
  & < & \sum\limits^{\infty}_{i = 1} (1 - \epsilon)^{r_i}
  < \sum\limits^{\infty}_{i=1} e^{-\epsilon r_i}
  \leq \sum\limits^{\infty}_{i=1} e^{-(\ln \frac{1}{(1-\delta)} + i\ln 2)}\\
  & = & \sum\limits^{\infty}_{i=1} e^{\ln {(1-\delta)} + \ln 2^{-i}}
  = (1 - \delta) \cdot \sum\limits^{\infty}_{i = 1} 2^{-i}
  = 1 - \delta.
\end{eqnarray*}
In other words, if the teacher returns $\mathit{YES}$ on the
$i^{\textmd{th}}$ equivalence query with a candidate finite automaton
$C_i$, we have at most $1 - \delta$ of chance to have $L (C_i)$
significantly different from $\DEC (\Pi)$. Equivalently, we have at
least $\delta$ of confidence to claim that $L (C_i)$ approximates
$\DEC (\Pi)$ within $\epsilon$ of error. 

\begin{theorem}
  Let $G$ be a CFG, $0 < \epsilon, \delta < 1$, and $D$ the 
  distribution on $L (S_G)$ determined in
  Section~\ref{section:sampling}. If Algorithm~\ref{algorithm:main}
  reports ``$G$ is correct,'' then $G$ is correct with at most
  $\epsilon$ of error and at least $\delta$ of confidence on $D (S_G)$.
\end{theorem}

If one can construct in a similar manner an NFA $F$ for feasible paths in $G$, then the verification problem is reduced to the emptiness problem of $L(F)\cap L(B)$. However, we did not find a suitable way to construct $F$. However, with PAC learning algorithm for automata, we should be able to find an automaton that is very ``similar'' to $L(F)$ and use it for verification.

With the two mechanism provided, we can invoke PAC-based automata learning algorithm to find a candidate automaton for $L(F)$. As mentioned, the learning algorithm works iteratively. In each iteration, the algorithm asks some number of membership queries that are answered using the mechanism mentioned above. At the end of an iteration, the algorithm proposes a candidate automaton, which is then checked by a PAC-based equivalence query (described in Section 3.1). Assume that we use a confidence level $90\%$ and allow an error rate $5\%$, at the $30$-th iteration of the learning with a candidate automaton $C_{30}$, we need to sample using the above mentioned mechanism $q_{30}=\lceil \frac{1}{0.05}(\ln \frac{1}{1-0.9} + 30 \ln 2) \rceil=462$ words of length smaller than $30$ and check if they are in the difference of $L(C_{30})$ and $L(F)$. Since we have the automaton $C_{30}$, we can do a membership test of automata and know whether a word is in its language. We can check if a word is in $L(F)$ by posing a membership query on the word. Once we find a word in the difference of $L(C_{30})$ and $L(F)$, the word will be returned to the learning algorithm to refine the next conjecture.
If no difference is found, we can conclude that with a confidence level $90\%$, we are sure that $95\%$ of the traces in the program with less than $30$ decisions are correct.

Notice that the confidence level is obtained by summing up the chance of the learning makes an incorrect (namely the difference to $L(F)$ is larger than $5\%$ is this case) conjecture automaton in any iteration. The number still holds even if we apply such an refinement procedure. 
}

\section{Handling Procedure Calls}
\label{section:function}


\hide{In this section, we extend our formalism of CFGs to handle programs with
multiple procedures and procedure calls.
Moreover, we also refine our representation of error traces.
The issue of using finite automata to represent error decision vectors in the
said setting  is that when returning from a~procedure call, a~finite automaton
cannot remember an (unbounded) number of return points (in the case of
recursive procedures).
Therefore, an overapproximation, such as a~nondeterministic jump to any possible
return point, needs to be used.
The said overapproximation is, however, too imprecise and yields numerous
spurious errors.
To address this issue, we also describe our extension of the representation of
error decision vectors using pushdown automaton.
This representation of the set of error decision vectors in programs with
procedure calls is precise.}

In this section, we extend our formalism of CFGs to handle programs with
multiple procedures.
We use a PDA to represent error decision vectors in this setting.
The issue of using finite automata to represent error decision vectors in the
said setting is that when returning from a~procedure call, a~finite automaton
cannot remember an unbounded number of return points (in the case of
recursive procedures).
Therefore, an overapproximation, such as a~nondeterministic jump to any possible
return point, needs to be used.
The said overapproximation is, however, too imprecise and yields numerous
spurious errors.
In contrast, PDAs can represent the set of error decision vectors precisely.

On the other hand, we still use a~finite automaton to represent the
approximation of the set of feasible decision vectors $\DEC(\Pi)$.
As a consequence, except that we need to use PDA operations instead of FA operations and 
handle procedure calls in the membership queries, all other components remain unchanged for the setting of multiple procedures. 

\vspace{-0.0mm}
\subsection{Extending CFGs with Procedure Calls}\label{sec:label}
\vspace{-0.0mm}

\hide{
Consider a~set of \emph{procedure names} $\procs$
and a~set of procedure \emph{parameter names} $\params$, disjoint from
$\mathcal{X}$.
We use $\proccalls{\procs}[\mathcal{X}, \mathcal{F}]$ to denote the set of all
pairs $(p, g)$ where $p \in \procs$ is a~procedure name and $g$ is a~well-formed
first-order logic formula over $\mathcal{X}, \params$, and $\mathcal{F}$.
}

Assume the set of procedure names $\procs$.
A~\emph{CFG with calls} (CFGC) is defined as a~graph $G = (V, E, v_i, v_r, V_e,
\fp)$ where $V, v_i, v_r, V_e,$ and $\fp$ are defined in the same way as for
a~CFG, and
$E \subseteq (V \times \mathcal{T}[\mathcal{X}, \mathcal{F}] \times V) \cup
(V \times (\procs \times \mathcal{T}[\mathcal{X}, \mathcal{F}] \times \mathcal{T}[\mathcal{X}, \mathcal{F}]) \times V)$
is an extended set of edges that apart from local CFG edges $(v, f, v')$ for $f \in
\mathcal{T}[\mathcal{X}, \mathcal{F}]$ also contains \emph{procedure
call edges} $e = (v, (p, g_{\mathit{in}}, g_{\mathit{out}}), v')$ for $(p,
g_{\mathit{in}}, g_{\mathit{out}}) \in \procs \times \mathcal{T}[\mathcal{X},
\mathcal{F}] \times \mathcal{T}[\mathcal{X}, \mathcal{F}]$ and sequential nodes $v$.
The $g_{\mathit{in}}$ and $g_{\mathit{out}}$ components of $e$ correspond to
formulae for passing actual values to formal parameters of
$p$ (formula~$g_{\mathit{in}}$) and passing the return value of $p$ back to the
caller procedure (formula~$g_{\mathit{out}}$).

In this extension, we define a~\emph{program} as a~set of CFGCs $\prog = \{G_1,
\ldots, G_n\}$ together with a~(bijective) mapping $\cfg_{\prog}: \procs \to \prog$ that
assigns procedure names to~CFGCs.
We abuse notation and use $\prog$ to denote $\cfg_{\prog}$, i.e.,
$\cfgof{p}$ denotes the CFGC of a~procedure~$p$ in a~program $\prog$.
We assume that all CFGCs in $\prog$ have pairwise disjoint sets of nodes, and
further assume an entry point $\fun{main} \in \procs$.

In this paragraph, we give an informal description of how we extend the
definition of a~path from a program consisting of single CFG to a program consisting of a set of CFGCs and a dedicated entry point.
Given a procedure call edge $e$ in a~CFGC $G$, we call the \emph{inlining} of
$e$ in $G$ the CFGC $G'$ obtained from $G$ by substituting $e$ with the CFGC of
the called procedure.
We use $\semof{\prog}$ to denote the set of CFGs obtained from
$\cfgof{\fun{main}}$ by performing all possible (even recursively called)
sequences of inlinings, and removing any left procedure call edges from the
output CFGCs.
A~\emph{path} in $\prog$ is then a~sequence $\pi =
\pthof{v_0, f_1, v_1, f_2, v_2, \ldots, f_m, v_m}$ such that
there exists a~CFG~$G' \in \semof{\prog}$ for which it holds that $\pi \in
G'$.

\vspace{-0.0mm}
\subsection{Encoding Error Decision Vectors with Pushdown Automata}\label{sec:error-pda}
\vspace{-0.0mm}

In this section, we describe how we construct the PDA encoding the set of error
traces in the considered extension.
The general idea is the same as the one for the use of finite automata
(described in Section~\ref{section:error-branch-decisions}).
The main difference is that we add jumps between CFGCs (corresponding to
procedure call edges), which use the stack to remember which state the PDA
should return to after the procedure call terminates.

In the following, given a~CFGC $G = (V, E, v_i, v_r, V_e, \fp)$, we use $V(G),
E(G), \ldots, \fp(G)$ to denote the corresponding components of $G$, and,
moreover, we use $V_s(G)$ and $V_b(G)$ to denote the set of sequential and
branching nodes of $G$ respectively.
Consider a~program $\prog = \{G_1, \ldots, G_n\}$.
We construct the \emph{error path
automaton} as the PDA $B_P = (\{0, 1\}, Q, Q, q_i, \Delta, F)$ in the
following way:
\begin{itemize}
\vspace{-1mm}
  \itemsep0mm
  \item  $Q = V(G_1) \cup {} \cdots {} \cup V(G_n)$,
  \item  $q_i = v_i(G_k)$ such that $\cfgof{\fun{main}} = G_k$,
  \item  $F = V_e(G_1) \cup {} \cdots {}\cup V_e(G_n)$,
  \item  $\Delta = \Delta_1 \cup {} \cdots {} \cup \Delta_n$ where every $\Delta_j$ is
    defined as follows:
    \begin{itemize}
    \vspace{-1mm}
    \itemsep0mm
     \item $(v, 0, v'_0) \in \Delta_j$ if $v \in V_b(G_j)\setminus V_e(G_j)$, $(v, f_0, v'_0) \in E(G_j)$,
       and $v'_0$ is the $0$-successor of~$v$;
     \item $(v, 1, v'_1) \in \Delta_j$ if $v \in V_b(G_j)\setminus V_e(G_j)$, $(v, f_1, v'_1) \in E(G_j)$,
       and $v'_1$ is the $1$-successor of~$v$;
     \item $(v, \lambda, v') \in \Delta_j$ if $v \in V_s(G_j)\setminus
       V_e(G_j)$ and $(v, f, v') \in E(G_j)$;
     \item $(v, 0, v), (v, 1, v) \in \Delta_j$ if $v \in V_e(G_j)$; and
     \item $(v, [\lambda;\lambda/v'], v_i(G_k)), (v_r(G_k), [\lambda;v'/\lambda], v') \in \Delta_j$ if $v \in
       V_s(G_j)\setminus V_e(G_j)$, $(v, (p, g_{\mathit{in}}, g_{\mathit{out}}),
       v') \in E(G_j)$, and $G_k = \cfgof{p}$.
     \end{itemize}
\end{itemize}

\begin{lemma}
  Let $\prog$ be a~program, $\badpaths$ the set of error paths of $\prog$, and
  $B_P$ be the error path PDA for $\prog$. Then it holds that $\langof{B_P} =
  \DEC (\badpaths)$.
\end{lemma}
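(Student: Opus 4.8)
The plan is to prove the two inclusions $\langof{B_P} \subseteq \DEC(\badpaths)$ and $\DEC(\badpaths) \subseteq \langof{B_P}$ by setting up a tight correspondence between accepting runs of $B_P$ and error paths of $\prog$, i.e.\ paths that reach an error node in some inlined CFG $G' \in \semof{\prog}$. This generalises the single-procedure argument of Section~\ref{section:error-branch-decisions}; the only genuinely new ingredient is that the stack of $B_P$ must faithfully encode the return points that the inlining semantics keeps track of \emph{structurally}. Since the node sets of $G_1, \ldots, G_n$ are pairwise disjoint, the states of $B_P$ are in bijection with the nodes of $\prog$, so a run of $B_P$ can be read directly as a walk through the nodes of the CFGCs, and the $0/1$ symbols it consumes at branching nodes coincide, by construction, with the values assigned by $\DEC$. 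Call transitions, return transitions, and transitions out of sequential nodes are all $\lambda$-transitions and contribute no symbol, exactly as the glue edges introduced by inlining contribute nothing to a decision vector.

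The technical core is a \emph{stack invariant}, to be established by induction on the length of a run. I would prove that $B_P$ can reach a configuration $(v, \gamma)$ from its initial configuration $(q_i, \epsilon)$ while consuming a word $u$ if and only if there is a CFG $G' \in \semof{\prog}$ and a path in $G'$ from the entry of $\cfgof{\fun{main}}$ to an occurrence of (a copy of) $v$ whose decision vector is $u$, where the stack $\gamma$ lists the return points of the procedure calls that are currently active along this path, with the most recent call on top. The induction step splits on the type of the last transition: branching and sequential transitions leave $\gamma$ unchanged and extend the path by one edge; a call transition $\pdatrans{v}{\lambda}{\lambda}{v'}{v_i(G_k)}$ pushes the return point $v'$ and corresponds to entering the inlined copy of $\cfgof{p} = G_k$; and a return transition $\pdatrans{v_r(G_k)}{\lambda}{v'}{\lambda}{v'}$ pops the matching $v'$ and corresponds to leaving that copy. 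The well-nestedness of the call/return structure produced by inlining is exactly what guarantees that, at every return, the symbol on top of $\gamma$ is the correct return point, so that the required pop transition is enabled; conversely, any accepting run of $B_P$ performs balanced pushes and pops within each procedure body, so the sequence of calls it takes determines a legitimate finite sequence of inlinings and hence an actual $G' \in \semof{\prog}$. I expect this invariant, and in particular the bookkeeping showing that the PDA's pops always match the pending return points, to be the main obstacle; recursion is handled uniformly because a finite accepting run makes only finitely many calls, has bounded stack depth, and therefore corresponds to a finite sequence of inlinings.

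With the invariant in hand, both inclusions follow essentially as in the single-CFG case. For $\DEC(\badpaths) \subseteq \langof{B_P}$, I would take an error path $\pi$ in some $G' \in \semof{\prog}$, let $v_e$ be the first error node it visits, use the invariant to simulate the prefix of $\pi$ up to $v_e$ by a run of $B_P$ ending in the final state $v_e$, and then let the self-loops $(v_e, 0, v_e), (v_e, 1, v_e)$ absorb the remaining suffix of $\DEC(\pi)$, so that $\DEC(\pi) \in \langof{B_P}$. For $\langof{B_P} \subseteq \DEC(\badpaths)$, I would decompose an accepting run into a productive prefix that first reaches an error node $v_e$, followed by self-loops at $v_e$ (assuming, as is standard, that return nodes are not error nodes, no transition other than these self-loops leaves an error node, so the run cannot escape once it enters $v_e$), translate the productive prefix via the invariant into a path of $\prog$ reaching $v_e$, and then account for the absorbed suffix exactly as in the proof for a single CFG. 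This reduction of the error-node and suffix handling to the already-established finite-automaton argument is deliberate: the procedure-call machinery is confined entirely to the stack invariant, leaving the treatment of branching, error, and sequential nodes literally unchanged from Section~\ref{section:error-branch-decisions}.
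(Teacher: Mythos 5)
The paper gives you nothing to match against here: the lemma is stated without any proof, and even its finite-automaton counterpart in Section~\ref{section:error-branch-decisions} is supported only by the sentence ``it is straightforward to see.'' So your proposal has to be judged on its own merits. The part of it that is genuinely new relative to the single-procedure case---the stack invariant relating push/pop transitions of $B_P$ to the inlining semantics $\semof{\prog}$, proved by induction with a case split on the last transition---is the right idea and is essentially sound, provided you state the invariant only for runs that avoid the error self-loops (your ``productive prefix'' decomposition does this implicitly). Your flagged side condition that return nodes are not error nodes is indeed necessary and is \emph{not} assumed by the paper; without it, a return transition can leave a final state and the language of $B_P$ picks up words that correspond to no path at all.

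The genuine gap is the step you wave off with ``account for the absorbed suffix exactly as in the proof for a single CFG.'' There is no such proof in the paper, and the step cannot be completed, because the inclusion $\langof{B_P} \subseteq \DEC(\badpaths)$ is false as stated: the symbols consumed by the self-loops $(v_e,0,v_e)$, $(v_e,1,v_e)$ are arbitrary, and nothing makes them realizable as decisions of an actual continuation of the path beyond $v_e$. Concretely, take a program whose only procedure is purely sequential, with nodes $v_i$, $v_e$, $v_r$, edges $(v_i,f,v_e)$ and $(v_e,f',v_r)$, and $v_e \in V_e$. Every path has decision vector $\lambda$, so $\DEC(\badpaths)=\{\lambda\}$, yet $B_P$ reaches the final state $v_e$ by one $\lambda$-transition and then loops, so $\langof{B_P}=\{0,1\}^*$. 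What your invariant actually proves---and what the verification procedure needs---is the corrected statement $\langof{B_P} = \DEC(\badpaths)\cdot\{0,1\}^*$, or equivalently $\langof{B_P} \cap \DEC(\Pi^{\mathrm{all}}) = \DEC(\badpaths)$ where $\Pi^{\mathrm{all}}$ is the set of all paths of $\prog$: since unfolding a CFG along a decision vector is deterministic, any path whose decision vector has a prefix leading to an error node must itself pass through that error node, so every word of $\langof{B_P}$ that is the decision vector of \emph{some} path is the decision vector of an error path. You should prove that restated lemma; your inductive machinery carries over unchanged. One further caveat: under the appendix's definition of $\semof{\prog}$, which replaces un-inlined call edges by $\FF$-labelled edges rather than removing them (as the main text says), error paths may cross such edges and skip the callee's decisions, which would also break the easy inclusion $\DEC(\badpaths) \subseteq \langof{B_P}$ and the path-to-run direction of your invariant; you need to adopt the ``removal'' reading for your induction to go through.
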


\section{Implementation}
\label{section:implementation}
We created a prototype tool \pacman that implements the verification procedure
described in this paper.
The tool uses several third-party libraries and tools.
First, it uses CIL (C Intermediate Language)~\cite{necula:cil2002} to
convert the verified C~program to a~set of~CFGCs, from
which we construct the error trace pushdown automaton $B_P$.
Further, we use the \libamore library~\cite{libamore,bollig:libalf2010} to perform operations of
automata, such as testing their membership and emptiness, or computing their
intersection.

For learning automata, we use the implementation of various learning algorithms
within the \libalf library~\cite{bollig:libalf2010}.
Membership queries are discharged using a~concolic tester, mentioned as an
alternative option in Section~\ref{section:membership}.
Given a decision vector, our tool uses the CFG of the program to generate
a~path corresponding to the decision vector.
The path is passed in the form of a~sequence of program statements to
the software model checker \cpachecker~\cite{beyer:cpachecker2011}, which
checks its feasibility.
It is possible to switch the model checker with other checkers, such as
\cbmc~\cite{clarke:tool04}.

To deal with equivalence queries, we modified the concolic tester
\crest~\cite{burnim:heuristics2008} to generate a~batch of $k$ decision vectors, as described in Section~\ref{section:sampling}.
As \crest may fail to generate the decision vector of a program
execution when the execution terminates abnormally, we modified \crest
to take a finite prefix of the execution in this case.
One issue of \crest that we encountered is that when it processes a condition
composed using Boolean connectives, it expands the condition into a~cascade of
\texttt{if} statements corresponding to the Boolean expression, making the
program longer and harder to learn.
We addressed this by modifying \crest so that it can process conditions with
Boolean connectives inside without expanding them, and in this way we increased
the performance and precision of the analysis.

We also implemented the following three optimizations to improve the
performance of the prototype.

\paragraph{Intersection with Bad Automaton}

Recall that our modified learning algorithm (described in
Section~\ref{section:modify_learning}) first checks whether the intersection of
the language of the conjecture $L(C)$ and the bad language $L(B_P)$ is empty.
Checking emptiness of a~PDA is, however, more difficult than that of
a~finite automaton.
To speed up the procedure, we build a finite automaton $B_O$ that
over-approximates the error language and always first checks whether $L(C)\cap
L(B_O)=\emptyset$, which is an emptiness test for finite automata.
We check $L(C)\cap L(B_P)=\emptyset$ only for the cases that the
previous test fails.

\paragraph{Counterexample from the Learning Algorithm}

When an equivalence query returns a counterexample $c$, automata
learning algorithms usually do not guarantee that $c$ is not a valid
counterexample in the next conjecture automaton.
In our preliminary experiments, we found out that it happens very often that the
mechanical teacher returns the same counterexample in several consecutive
iterations.
Therefore, we decided to check whether $c$ is still a~valid counterexample (by
a~membership query) for the learning algorithm before proceeding to the
emptiness test.
In the case $c$ is valid, it will be immediately returned to the learning
algorithm to refine the conjecture.

\paragraph{Handling Membership Queries}

The main bottleneck of our approach is the time spent for membership
queries.
In our implementation, the software model checker
\cpachecker is used to check whether a path is feasible.
For each membership query, if we invoke \cpachecker with a system
call, a~Java virtual machine will be created and the components of \cpachecker
need to be loaded, which is very time consuming.
To make membership queries more efficient, we modified \cpachecker
to run in a~server mode so that it can check more than a~single path without
being re-invoked.

\hide{
The module \modhelper can then construct an automaton $E_G$ accepting
the decision vectors of error paths in $G$.
With a specified decision vector, \modhelper can produce the path
induced by the decision vector in $G$ as a C program.
The module \modteacher can answer membership and equivalence queries.
For the membership query of a decision vector, \modteacher first
obtains the path induced by the decision vector from \modhelper.
\modteacher then asks the external verifier
\cpachecker~\cite{beyer:cpachecker2011} to check if the path is
feasible or not, that is, if the end of the path is reachable or not.
For the equivalence checking of a conjecture $C$, \modteacher can
uniformly select samples from the decision vectors of feasible paths
in $G$ and then perform membership queries of the selected samples.
Again, during the selection, whether a decision vector is feasible or
not is checked by \modhelper.
To check if $L(C) \cap L(E_G) = \emptyset$ in the equivalence
checking, \modteacher relies on the automata library
\libamore~\cite{bollig:libalf2010}.
The module \modverifier implements Algorithm~\ref{algorithm:main} by
connecting the module \modteacher with the external library
\libalf~\cite{bollig:libalf2010}, which is a learning framework with
different kinds of learning algorithms implemented in.
If the result of an equivalence query of $C$ is respectively
$\mathit{YES}$ or $\mathit{ERROR}$, \modverifier reports that $G$ is
correct or that $G$ is incorrect.
Otherwise, a counterexample is found and \modverifier asks \libalf to
refine the conjecture $C$.

A main bottleneck of our approach is membership query where in our
implementation, \cpachecker is used to check if a path is feasible.
Based on the original design of \cpachecker, for each membership
query, we have to invoke \cpachecker once. 
This is very inefficient because \cpachecker spends much time in
loading its components in an invocation.
To make membership queries more efficient, we modified \cpachecker
such that more than one path can be checked in a single \cpachecker
process.
Another choice for performance improvement is to find a suitable tool
that can translate a path to an SMT formula, which can be solved by
SMT solvers to see if the path is feasible.}


\section{Experiments}
\label{section:experiments}

This section presents our experimental results to justify the
claims made in this paper.
We evaluated the performance of our prototype using the recursive
category of SV-COMP~2015~\cite{svcomp} as the benchmark.
The recursive category consists of 24 non-trivial examples such as
Ackermann, McCarthy 91, and Euclidean algorithms. Among eight
participating tools, only two can solve 20 or more examples correctly. 
\hide{
 and is one
of the most difficult categories in the competition~\footnote{The winner
of the category received 27 out of 40 points.}.
}
Among the 24 examples, 8 of them contain an error.
We performed our experiments with the error parameter $\epsilon = 0.1$, confidence
$\delta = 0.9$, and size of batched samples $k = 10$.
We ran our prototype on each example three times in all experiments.
The provided statistical data were calculated based on the average of the
three runs unless explicitly stated otherwise.
We set the timeout to 900\,s to match the rules of SV-COMP~2015.

\vspace{-0.0mm}
\subsection{Comparison of Learning Algorithms}\label{sec:label}
\vspace{-0.0mm}

We evaluated our approach with different automata learning algorithms
implemented within the \libalf library.
There are five active online automata learning algorithms implemented in
\libalf: Angluin's $L^*$~\cite{angluin:learning1987}, $L^*$-columns,
Kearns/Vazirani (KV)~\cite{kearns:introduction1994}, Rivest/Schapire
(RS)~\cite{rivest:inference1993}, and
$\mathit{NL}^*$~\cite{bollig:angluin2009}.
Among the search strategies provided by \crest, we chose the
\emph{random branch} strategy.
The experimental results are in Table~\ref{table:cmp_learning_alg}.

\begin{table}[tb]
\centering
\caption{Comparison of learning algorithms}
\vspace{-2mm}
\scalebox{0.9}{
\begin{tabular}{|l|l|l|l|l|l|}
\cline{2-6}
\multicolumn{1}{c|}{} & \multicolumn{5}{c|}{Algorithms}\\
\cline{2-6}
\cline{2-6}
\multicolumn{1}{c|}{}&KV&$L^*$&$L^*$-col.&RS&$NL^*$\\
\hline
Verified                 &   15 & 9.67 &    10 & 11.33 &  8.33 \\
Bug found                &    6 & 6.33 &     6 &  6.33 &     6 \\
\ \ by Bad               &    4 & 4.67 &     4 &     3 &  4.33 \\
\ \ by \crest            &    2 & 1.67 &     2 &  3.33 &  1.67 \\
\hline
False positives           &    0 &    0 &     1 &  0.33 &     1 \\
False negatives           &    2 & 1.67 &  1.33 &  1.67 &     1 \\
Timeouts                  &    1 & 6.33 &  5.67 &  4.33 &  7.67 \\
\hline
\# of $\MEM$ queries      & 2896 & 8898 & 10071 & 15377 & 14463 \\
\# of $\EQ$ queries       &  548 &   78 &    77 &   367 &    67 \\
\hline
Total time [s]           & 2406 & 6668 &  6565 &  5786 &  7972 \\
$\MEM$ queries time      & 30\,\% & 59\,\% &  58\,\% &  63\,\% &  70\,\% \\
\hline
\end{tabular}
}
\label{table:cmp_learning_alg}
\vspace{-2mm}
\end{table}

The results show that KV is the algorithm with the
best performance---it solved 21 out of the 24 examples.
Our technique solves more than any participant in the recursive
category of SV-COMP~2015 but the winner.
The main reason for the performance difference is that KV uses a
tree-based data structure to store query results.
Compared to other learning algorithms that use table-based structures,
KV requires much less number of membership queries to maintain the
consistency of the tree-based structure.
For all learning algorithms except RS, the number of error paths found
by the emptiness test of the intersection of the conjecture and the bad
automaton is more than that found by \crest.
In our experiments, the time spent for membership queries is usually
the performance bottleneck.
Table~\ref{table:cmp_learning_alg} shows that membership queries took
30\,\% of the total execution time for KV and at least 58\,\% for
other algorithms.

\vspace{-0.0mm}
\subsection{Comparison of Search Strategies}\label{sec:label}
\vspace{-0.0mm}

We also evaluated the performance of our algorithm against different \crest
search strategies.
According to~\cite{burnim:heuristics2008}, the most
efficient ones are \emph{random branch} strategy (RBS for short) and
\emph{control-flow directed} strategy (CDS for short).
Therefore, we tested the performance of our prototype using these two
strategies.
We selected KV as the learning algorithm in this experiment.
The results are shown in Table~\ref{table:cmp_crest_strategies}.

\begin{table}[tb]
\centering
\caption{Comparison of search strategies of \crest}
\vspace{-2mm}
\begin{tabular}{|l|l|l|}
\cline{2-3}
\multicolumn{1}{c|}{} & \multicolumn{2}{c|}{Search Strategy}\\
\cline{2-3}
\cline{2-3}
\multicolumn{1}{c|}{}&\emph{RBS}&\emph{CDS}\\
\hline
Verified                &   15 &   15 \\
Bug found               &    6 &    6 \\
\hline
False positives         &    0 &    0 \\
False negatives         &    2 &    2 \\
Timeouts                &    1 &    1 \\
\hline
\# of $\MEM$ queries    & 2896 & 2362 \\
\# of $\EQ$ queries     &  548 &  463 \\
\hline
Total time [s]          & 2406 & 2013 \\
Time for one sample [s] & 0.75 & 0.82 \\
\hline
\end{tabular}
\label{table:cmp_crest_strategies}
\vspace{-4mm}
\end{table}

Table~\ref{table:cmp_crest_strategies} shows that although the average
time for taking one sample with CDS is more than with RBS, the total
time is less.
The main reason is that CDS explores untouched branching points more
aggressively than RBS but requires more overhead.
Our experiments conform the results
in~\cite{burnim:heuristics2008}.

\hide{
The result is a bit surprising.
Before this experiment, we expect that the CFG directed strategy will
lead to a better performance, because it explores untouched branching
points more aggressively than the random branching strategy.
Moreover, in the results of~\cite{burnim:heuristics2008}, the
performance of CFG directed strategy is slightly better than the
random branching strategy.
However, it turns out this is not the case in our application.
We solved more examples using the the random branching strategy than the CFG directed strategy. The reason we found is that it took more time to make one sample using the CFG directed strategy than the random branching strategy. 
In our experiment, \td{\crest sampling takes XX\% of the total execution time with the random branching strategy and XX\% for the CFG directed strategy.}

Although the random branching strategy has a better performance in terms of the number of solved examples, one might have doubt that the CFG directed strategy might generate a more precise over-approximation than the random branching strategy when the example is bug-free. To evaluate the quality of the learned automata, we run \crest to get for each strategy and for each automaton from the last learning iteration  100 samples. Each sample contains 10 decisions. Then we check how many of the samples are contained in the languages of the automata learned. The result is shown in Table~\ref{table:conj_quality1}. From the table, surprisingly, we found the quality of the automata learned with the two strategies is almost the same. Also, observe that the guarantee of our procedure is that the sample coverage rate is higher than 90\% in the worst case. Our experimental results show that the quality of the automata produced by our procedure matches the expectation of from the theory.
}

\vspace{-0.0mm}
\subsection{Evaluation of CREST with Restarts}\label{sec:label}
\vspace{-0.0mm}
\begin{table}[htb]
\centering
\caption{Evaluation of \crest with and without restart.  For each example, the
number of batches and the number of iterations used to find bugs are obtained
respectively from the worst run in scenario~(1) and from the best run in
scenario~(2).}
\scalebox{0.95}{
\begin{tabular}{|l|l|l|}
\cline{2-3}
\multicolumn{1}{c|}{} & \multicolumn{2}{c|}{Settings} \\
\hline
Examples & \emph{Batch Size 10} & \emph{Never Restart} \\
\hline
Ackermann02           & batch 3 & iteration 14 \\
Addition02            & batch 1 & iteration 2 \\
Addition03            & Timeout & Timeout     \\
BallRajamani-SPIN2000 & batch 1 & iteration 1 \\
EvenOdd03             & batch 1 & iteration 2 \\
Fibonacci04           & batch 4 & iteration 2 \\
Fibonacci05           & Timeout & Timeout     \\
McCarthy91            & batch 1 & iteration 2 \\
\hline
\end{tabular}
}
\label{table:cmp_crest_restart}
\end{table}
To justify our modification to the $\pacofed$-correctness guarantee given in Section~\ref{sec:gen-stoch-eq},
we show in the experiment below that running \crest in batches does
not decrease its bug-hunting capabilities.
We compared the performance of \crest with two different scenarios:
(1)~restart after each 10 decision vectors and
(2) never restart.
We performed the experiment on the 8 buggy examples in the recursive
category and calculated in how many examples \crest found a~bug for
within the timeout period.
In~Table~\ref{table:cmp_crest_restart}, we chose RBS as the search strategy.
We also tried the experiments with the CFG strategy and got a similar result.
We list the worst result for
scenario~(1) and the best result for scenario~(2) that we received in our three
runs.
We found out that the worst runs in scenario~(1) can still find
with a~little overhead all bugs found by the best runs in scenario~(2).

\vspace{-0.0mm}
\subsection{\hspace*{-1mm}Evaluating Quality of Learned Automata}\label{sec:label}
\vspace{-0.0mm}
Besides the performance in terms of the running time, we also compared the
quality of the learned automata produced by our prototype using the
two strategies for the 15 successfully verified bug-free examples.
To evaluate the quality of the learned automata, for each example, we
ran \crest with the given search strategy to get 100 batched samples and 
tested how many of the them are all accepted by the learned
automaton.
The average values of the runs are shown in
Table~\ref{table:conj_quality1} where evaluation strategies are
strategies used to generate the testing batched samples.
The table shows that the quality of the automata
learned with the two strategies is almost the same.
Also, observe that the guarantee of our procedure is that the sample
coverage is higher than 90\,\%.
Our experimental results show that the quality of the automata
produced by our procedure matches the theoretical expectations.

\begin{table}[tb]
\centering
\caption{Comparison of \crest search strategies in terms of the quality of
  the learned automata. The total number of tested batches is 1500.}
\begin{tabular}{|l|l|l|l|}
\cline{3-4}
\multicolumn{2}{c|}{} & \multicolumn{2}{c|}{$\EQ$ query strategy}\\
\cline{3-4}
\cline{3-4}
\multicolumn{2}{c|}{}&\emph{RBS}&\emph{CDS}\\
\hline
Evaluation     & Accepted  & 1487    & 1473 \\
\cline{2-4}
(RBS)          & Ratio     & 99.13\,\% & 98.2\,\% \\
\hline
\hline
Evaluation     & Accepted  & 1489    & 1500 \\
\cline{2-4}
(CDS)          & Ratio     & 99.27\,\% & 100\,\% \\
\hline
\end{tabular}
\label{table:conj_quality1}
\vspace{-2mm}
\end{table}

Finally, we tested how many words generated by \crest
are not covered in the automata learned with the KV algorithm and RBS.
Again, we ran \crest with RBS in two scenarios:
(1)~restart after each 10 decision vectors and
(2)~never restart.
For each learned automaton (there were 15) and each scenario, we generated 1000
decision vectors and checked how many of them are accepted by the
automaton.
In total, for scenario~(1), we observed 1487 accepted batches of size 10 (for
the total of 15\,000 tested vectors), yielding the correctness 99.13\,\%.
For scenario~(2), we observed 14\,977 accepted vectors, for the correctness
99.86\,\%.
We notice that no matter which strategy we use, the learned automaton accepts
over 99\,\% of the decision vectors produced by \crest.


\section{Discussion}
\label{section:conclusions}
There are several advantages of having a program model with statistical guarantees.
For instance, the model can be reused for verifying a~different set of properties of the program.
Assume that the new property to be verified is described as an error path
automaton $B'$ and $C$~is the learned automaton.
If $\langof {B'} \cap \langof C =\emptyset$, we verified the program with the
new property and the same $\pacofed$-correctness guarantee.
For the case that there exists a~decision vector $w\in \langof {B'} \cap
\langof C$, we test whether $w$~is feasible and either report that $w$ is
a~feasible error decision vector w.r.t.\ $B'$ or continue the learning algorithm
with~$w$ as a~counterexample for refining the next conjecture.

In this paper we focus on checking validity of program assertions.
The verification step is handled by making an intersection of the conjecture
automaton $C$ and the error path automaton $B$ and testing its emptiness.
This procedure can be generalized to more sophisticated safety properties by
replacing the tests $\langof C \cap \langof B =\emptyset$ and $s \in \langof B$
with other tests.
For example, we can check the property ``the program contains at most 10
consecutive 1-decisions on any path'' with a~statistical guarantee of the
correctness of the received answer.
By extending the alphabet $\{0,1\}$ with program labels, one can also check temporal
properties related to those labels, e.g., ``label $A$ should be reached within
10 decisions after label $B$ is reached.''

One possible extension of our work is to learn sequences of feasible function
calls instead of decision vectors.
This might lead to a more compact model in contrast to the current approach.
However, in this case the alphabet of the model to be learned will be all
function names in the program, which is usually much larger than 2, the size of
the alphabet in our work.
Moreover, in this setting, it is much harder to answer membership queries;
a~program path composed of function calls
might perform a~complex traversal through loops and branches in between the
calls, making the problem of checking feasibility of a~program path already
undecidable.

One benefit of our approach is that, in principle, it can be extended to black
box system verification and model synthesis.
By observing the behavior of the environment, we may find some pattern (e.g., some
statistical distribution) of the inputs and then, based on that, design
a~sampling mechanism.
Under the assumption that the behavior of the environment remains unchanged, we can verify or
synthesize the model of the system w.r.t.\ the given sample distribution. 

\hide{The design of the sampling mechanism is one of the most difficult part in the entire procedure. Our first proposal was to give all decision vectors with length smaller than $k$ the sample probability to be chosen, where $k$ is a big integer number. However, we found that most of the decision vectors are infeasible. In our experience, in it very often that we can claim an automaton with empty language is 99\% similar to $\DEC(\Pi)$ using this sampling mechanism. 

\td{OL: remove the previous paragraph? it doesn't suit me there ...}}


\section{Related Works}
\label{section:related}


%

Exact automata learning algorithm was first proposed by
Angluin~\cite{angluin:learning1987} and later improved by many
people~\cite{angluin:learning1987,rivest:inference1993,kearns:introduction1994,bollig:angluin2009}.
The concept of probably approximately correct (PAC) learning was first proposed
by Valiant in his seminal work~\cite{valiant:theory1984}.
The idea of turning an exact learning algorithm to a PAC learning algorithm can
be found in Section~1.2 of~\cite{angluin:queries1987}.

Applying PAC learning to testing has been considered before~\cite{walkinshaw:assessing2011,goldreich:property1998}. 
The work in \cite{goldreich:property1998} considers a program that manipulates
graphs and check if the output graph of the program has properties such as
being bipartite, $k$-colorable, etc.
Our work considers assertion checking, which is more general than the
specialized properties.
The work~\cite{walkinshaw:assessing2011} considers more
theoretical aspects of the problem.
The author estimates the maximal number of queries required to infer a~model of
a~black box machine.
The context is quite different, e.g., the work does not discuss how to sample
according to some distribution efficiently to produce the desired guarantee
(bounded path coverage) as we do in this paper.

The $L^*$ algorithm has been used to infer the model of error traces of a program.
In~\cite{chapman:learning2015}, instead of decision vectors, the authors try to
learn the sequences of function calls leading to an error.
Their teacher is implemented using a~bounded model checker and hence can only guarantee correctness up to a~given bound.
The authors do not make use of the PAC learning technique as we did in this work.

Both our approach and statistical model
checking~\cite{sen:statistical2004,legay:statistical2010,zuliani:bayesian2010}
provide statistical guarantees.
As mentioned in the introduction, statistical model checking assumes a given
model while our technique generates models of programs with statistical guarantee. Those
models can be analyzed using various techniques and reused for
verifying different properties.


\newcounter{newsection}
\setcounter{newsection}{\value{section}}
\addtocounter{newsection}{1}

\appendix

\vspace{-0.0mm}
\section{An Inlining in a~CFGC}\label{app:inlining}
\vspace{-0.0mm}

Consider a~procedure call edge $e = (v, (p, g_{\mathit{in}},
g_\mathit{out}), v')$ of
$G_j = (V_j, E_j, v_{ij}, v_{rj}, V_{ej}, \fp{}_j)$ for $1 \leq j \leq n$.
Further assume that 
$\cfgof{p} = G_k = (V_k, E_k, v_{ik}, v_{rk}, V_{ek}, \fp{}_k)$ for $1 \leq k
\leq n$.
The \emph{inlining} of $e$ in $\prog$ is the program $\prog^{\#} = \{G_1, \ldots,
G_{j-1}, G^{\#}_j, G_{j+1}, \ldots, G_n\}$ such that
%
$G^{\#}_j = (V_j \cup V^{\#}_k, E^{\#}, v_{ij},
v_{rj}, V_{ej} \cup V^{\#}_{ek}, \fp{}_j)$
%
where $V^{\#}_k$ is a~set of fresh nodes (i.e., $V^{\#}_k$ is disjoint from the
set of
nodes of any CFGC in $\prog$) and $V^{\#}_{ek} \subseteq V^{\#}_k$.
Moreover, there exist bijections $\subst: V^{\#}_k \to V_k$ and $\subst_e:
V^{\#}_{ek}
\to V_{ek}$ such that $\subst_e \subseteq \subst$.
The set of edges $E^{\#}$ is defined as
%
  $E^{\#} = (E_j \setminus \{e\}) \cup E^{\#}_k \cup \{(v, g_{\mathit{in}},
  v^{\#}_{ik}), (v^{\#}_{rk}, g_{\mathit{out}}, v')\}$
%
where $E^{\#}_k =  \{(v^{\#}_k, f, v^{\#\prime}_k) \mid (\substof{v^{\#}_k}, f,
\substof{v^{\#\prime}_k}) \in E_k\}$.

Let $\inlngsof{\prog}$ be the smallest (potentially infinite) set that contains
$\prog$ and is closed w.r.t.\
inlinings in $\fun{main}$, i.e., if $\inlngsof{\prog}$ contains a~program
$\prog^{\#}$ with a~procedure call edge $e$ in the CFGC
$\cfgprimeof{\fun{main}}$, it also contains the inlining of~$e$ in~$\prog^{\#}$.
We abuse notation and use $\subst$ uniformly to denote for all programs
$\prog^{\#} \in \inlngsof{\prog}$ the mapping of the nodes of the CFGCs
in~$\prog^{\#}$ to their original nodes in $\prog$ (for nodes $V$ of $\prog$, we
assume that $\subst|_{V} = \mathrm{id}$, i.e., that the restriction of $\subst$
to $V$ is the identity relation).
We~denote as $\semof{\prog}$ the set of CFGs (without procedure call edges)
obtained by starting from the set $\inlngsof{\prog}$,
collecting the CFGCs of $\fun{main}$ functions of all inlinings of $\prog$
into the set $M = \{\cfgprimeof{\fun{main}} \mid \prog^{\#} \in
\inlngsof{\prog}\}$, and, finally, transforming the CFGCs of $M$ into
CFGs of $\semof{\prog}$ by substituting each procedure call edge $(v^{\#}, (p^{\#},
g^{\#}_{\mathit{in}}, g^{\#}_{\mathit{out}}), v^{\#\prime})$ with the edge
$(v^{\#}, \FF, v^{\#\prime})$.

We extend the definition of a~path as follows:
A~\emph{path} in $\prog$ is a~sequence $\pi =
\pthof{v_0, f_1, v_1, f_2, v_2, \ldots, f_m, v_m}$ such that
there exists a~CFG~$G' \in \semof{\prog}$ and a~CFG path
$\pi' = \pthof{v'_0, f_1, v'_1, f_2, v'_2, \ldots, f_m, v'_m}$ in~$G'$ such that
$\forall 0 \leq j \leq m: v_j = \substof{v'_j}$.

\def\thesection{\arabic{newsection}}

\bibliographystyle{plain}
\bibliography{refs}

\end{document}